\springerdoi\url{https://doi.org/10.1007/978-3-319-64203-1_15}
\renewcommand{\thefootnote}{\fnsymbol{footnote}}
\newif\ifarxiv
\newcommand{\figbasename}{fig/maxcost}
\newcommand{\figbasename}{fig/maxcost-ep}
\newenvironment{hideproof}{\begin{proof}}{\end{proof}}
\newcommand\blfootnote[1]{%
  \begingroup
  \renewcommand\thefootnote{}\footnote{#1}%
  \addtocounter{footnote}{-1}%
  \endgroup
}
\newcommand{\pb}{{\sc{MSE}}\xspace}
\newcommand{\greedy}{{\sc{FillGreedy}}\xspace}
\newcommand{\lptbytypes}{{\sc{GreedyFor2Types}}\xspace}
\newcommand\fixme[1]{}
\begin{document}
\mainmatter

\title{%
  Optimizing egalitarian performance\\in the side-effects model of colocation\\for data center resource management%
}

\titlerunning{Egalitarian performance in the side-effects colocation}

\author{%
Fanny Pascual\inst{1}
\and Krzysztof Rzadca\inst{2}}  
\institute{%
  Sorbonne Universit\'es, UPMC, 
  LIP6, CNRS, UMR 7606, Paris, France\\
Email: \url{fanny.pascual@lip6.fr}
\and
Institute of Informatics, University of Warsaw, Poland\\
Email: \url{krz@mimuw.edu.pl}}

\maketitle

\begin{abstract}
In data centers, up to dozens of tasks are colocated on a single physical machine. Machines are used more efficiently, but tasks' performance deteriorates, as colocated tasks compete for shared resources. As tasks are heterogeneous, the resulting performance dependencies are complex. In our previous work~\cite{pascual2015partition}
we proposed a new combinatorial optimization model that uses two parameters of a task --- its size and its type --- to characterize how a task influences the performance of other tasks allocated to the same machine.

In this paper, we study the egalitarian optimization goal: maximizing the worst-off performance. This problem generalizes the classic makespan minimization on multiple processors ($P||C_{\max}$).
We prove that polynomially-solvable variants of $P||C_{\max}$ are NP-hard and hard to approximate when the number of types is not constant.
For a constant number of types, we propose a PTAS, a fast approximation algorithm, and a series of heuristics.
We simulate the algorithms on instances derived from a trace of one of Google clusters.
Algorithms aware of jobs' types lead to better performance compared to algorithms solving $P||C_{\max}$.

The notion of type enables us to model degeneration of performance caused by colocation 
using standard combinatorial optimization methods. Types add a layer of additional complexity.
However, our results --- approximation algorithms and good average-case performance --- show that types can be handled efficiently.
\end{abstract} 
\begin{keywords}
cloud computing; scheduling; heterogeneity; co-tenancy; complexity
\end{keywords}

\section{Introduction}
The back-bone of cloud computing, the modern data center redefines how industry and academia use computers.
\blfootnote{Author's version of a paper published in Euro-Par 2017 Proceedings; extends the paper with addtional results and proofs. In F.F. Rivera et al. (Eds.): Euro-Par 2017, LNCS 10417, Springer.
  The final publication is available at Springer via
  \springerdoi
}

Resource management in data centers significantly differs from scheduling jobs on a typical HPC supercomputer.
First, the workload is much more varied~\cite{reiss2012heterogeneity}:
data centers act as a physical infrastructure providing virtual machines,
or higher-level services, such as memory-cached databases
or network-in\-ten\-sive servers; in contrast, there are relatively few HPC-like com\-pu\-ta\-tio\-nally-intensive batch jobs (later, we will use a generic term \emph{task} for all these categories).
Consequently, a task usually does not saturate the resources of a single node~\cite{kambadur2012measuring}.
Tasks' loads vastly differ: in a published trace~\cite{reiss2012heterogeneity}, tasks' average CPU loads span more than 4 orders of magnitude.
In contrast to HPC scheduling in which jobs rarely share a node, heterogeneity in both the type and the amount of needed resources makes it reasonable to allocate multiple tasks to the same physical machine.

Tasks colocated on a machine compete for shared hardware. Despite significant advances in both OS-level fairness and VM hypervisors, virtualization is not transparent: multiple studies show \cite{kambadur2012measuring, koh2007analysis,  xu2013bobtail, kim15corunneraff, podzimek15cpupinning} that the performance of colocated tasks drops. Suspects include difficulties in sharing the CPU cache or the memory bandwidth.
The resource manager should thus colocate tasks that are compatible, i.e., that use different kinds of resources --- hence, it should optimize tasks' performance. This, however, requires a performance model.

Our side-effects model~\cite{pascual2015partition} bridges the gap between colocation in datacenters and the theoretical scheduling, bulk of which has been developed for non-shared machines.

Rather than trying to predict tasks' performance from OS-level metrics, we abstract by characterizing a task by two characteristics: \emph{type} (e.g.: a database, or a com\-pu\-ta\-tio\-nally-intensive job) and \emph{load} relative to other tasks of the same type (e.g.: number of requests per second).
The total load of a machine is a vector: its $i$-th dimension is the sum of loads of tasks of the $i$-th type located on this machine.
Each type additionally defines a performance function 
mapping this vector of loads to a type-relevant performance metric.
As datacenters execute multiple instances of tasks such function can be inferred by a monitoring module~\cite{xu2013bobtail,kim15corunneraff, podzimek15cpupinning} matching task's reported performance (such as the 95th percentile response time) with observed or reported loads.

In this paper, we consider optimization of the worst-off performance (analogous to makespan in classic multiprocessor scheduling problem, $P||C_{\max}$~\cite{graham1969bounds}). We use a linear performance function: on each machine, the influence a type $t'$ has on type $t$ performance is a product of the load of type $t'$ and a coefficient $\alpha_{t',t}$. The coefficient $\alpha_{t',t}$ describes how compatible $t'$ load is with $t$ performance (the coefficient is similar to interference/affinity metrics proposed in \cite{podzimek15cpupinning,kim15corunneraff}). Low values ($0 \leq \alpha_{t',t} < 1$) correspond with compatible types (e.g.: colocating a memory-intensive and a CPU-intensive task): it is preferable to colocate a task $t$ with tasks of the other type $t'$, rather than with other tasks of its own type $t$. High values ($\alpha_{t'', t}>1$) denote types competing for resources.

The contribution of this paper is as follows.
\begin{inparaenum}[(1)]
\item We prove that the notion of type adds complexity, as makespan minimization with unit tasks $P|p_i=1|C_{\max}$ (a polynomially solvable variant of $P||C_{\max}$) becomes NP-hard and hard to approximate when the number of types $T$ is not constant (Section~\ref{sec:complexity}).
We then show how to cope with that added complexity. 
We propose 
\item a PTAS for a constant $T$ and a constant $\alpha$ (Section~\ref{sec:ptas}); and 
\item a fast greedy approximation algorithm (Section~\ref{sec:greedy-list}).
\item We also propose natural greedy heuristics (Section~\ref{sec:heuristics})
\ifarxiv
To characterize the optimal schedules in function of the coefficient $\alpha$, we study a series of special cases with two types ($T=2$, Section~\ref{sec:two-types}).
\item We identify two tipping points, i.e., values of $\alpha$ for which the shape of the optimal schedule changes. For $0 \leq \alpha \leq 1$, all machines should be shared between types (if possible). For $1 < \alpha < 2$, there are some instances that share all machines, but for divisible load (i.e., many small tasks), there is at most one shared machine. Finally, for $\alpha \geq 2$ at most one machine is shared. (Proposition~\ref{prop:clashing-1mach}). For each case
\item we show fast approximation algorithms.
  \else (in the accompanying technical report~\cite{pascual2017maxcost-techreport} we show they are approximations for $T=2$).
  \fi
\item We also test our algorithm by simulation on a trace derived from one of Google clusters (Section~\ref{sec:experiments}).
\end{inparaenum}

\section{Side-Effects of Colocating Tasks: A Model}\label{sec:model}


We study a min-max (egalitarian) performance criteria for our side-effects performance model (introduced in~\cite{pascual2015partition}, where we studied a utilitarian objective, min-sum). We consider a system that allocates $n$ tasks $J=\{1, \dots, n\}$ to $m$ identical machines $\mathcal{M} = \{M_1,\dots, M_m\}$. Each task $i$ has a known size $p_i\in \mathbb{N}$ (i.e., clairvoyance, a  common assumption in scheduling; the sizes can be estimated by previous instances or users' estimates).
The size corresponds to the load the task imposes on a machine: the request rate for a web server; or the cpu load for a cpu-intensive computation.
We take other assumptions standard in scheduling theory: all tasks are known (off-line) and ready to be scheduled (released at time 0).
We take these assumptions to derive results on the basic model before tackling more complex ones. 
We denote by $p_{\max}=\max p_i$ the largest task and by $W$ the total load, $W=\sum p_i$.
We assume that the tasks are indexed by non-increasing sizes: $p_1\geq p_2\geq \dots \geq p_n$. 


A \emph{partition} (an \emph{allocation}) is an assignment of each of the $n$ tasks to one of the $m$ machines. A partition separates the tasks into at most $m$ subsets: each subset corresponds to the tasks allocated on the same machine. 
Given a partition $P$, we denote by $M_{P,i}\in \mathcal{M}$ the machine on which task $i$ is allocated. Due to the similarities with $P||C_{max}$, we sometimes use the term ``schedule" (and the symbol $\sigma$) for an allocation (and even the term of length for the size of a task). In this case, only the allocation is meaningful (not the order of the tasks on the machines).

The main contribution of this paper lies in analyzing side-effects of colocating tasks. 
The impact of task $i$ on the performance of another task $j$ is a function of task's size $p_i$ and \emph{task's type} $t_i$. 
Types generalize tasks' impact on the performance and may have different granularities: for instance, ``a webserver'' and ``a database''; or ``a read-intensive MySQL database''; or, as in~\cite{kim15corunneraff}, ``an instance of Blast'' .
We assume that the type $t_i$ is known (which again corresponds to the clairvoyance assumption in classic scheduling; typically a data~center runs many instances of the same task, so task's type can be derived from the past).
Let $\mathcal{T}=\{1,\dots,T\}$ be a set of $T$ different types of tasks. Each task $i$ has type $t_i \in \mathcal{T}$. 
For each type $t\in \mathcal{T}$, we denote by $J^{(t)}$
the tasks which are of type $t$;
and by $p_i^{(t)}$ the size of the $i$-th largest task of type $t$ (ties are broken arbitrarily).


We express performance of a task $i$ by a cost function $c_i$:
to simplify presentation of our results, we prefer to express our problems as minimization of costs, rather than maximization of performance (for a single type, our cost is synonymous with the makespan). Note that the cost is unrelated to monetary cost (the amount of money that a job pays to the machine) --- we do not consider monetary costs in this paper.
Task's $i$ cost $c_i$ depends on to the \emph{total load} of tasks $j$ colocated on the same machine $M_{P,i}$, but different types have different impacts:
$c_i = \sum_{j \mbox{ on machine } M_{P,i}} p_j . \alpha_{t_j,t_i},$
The cost function also takes into account the task $i$ itself, as well as other tasks of the same type.
A coefficient $\alpha_{t, t'}\in \mathbb{R}_{\geq 0}$
defined for each pair of types $(t,t')\in \mathcal{T}^2$, 
measures the impact of the tasks of type $t$ on the cost of the tasks of type $t'$ (allocated on the same machine). 
If $\alpha_{t,t'}=0$ then a task of type $t$ has no impact on the cost of a task of type $t'$; the higher the $\alpha_{t,t'}$, the larger the impact. Coefficients are not necessarily symmetric, i.e., it is possible that $\alpha_{t,t'} \neq \alpha_{t',t}$.
The coefficients $\alpha_{t, t'}$ can be estimated by monitoring tasks' performance in function of their colocation and their sizes (a data~center runs many instances of similar services~\cite{kim15corunneraff,podzimek15cpupinning,xu2013bobtail}).
We consider the linear cost function as it generalizes, by adding coefficients $\alpha_{t,t'}$, the fundamental scheduling problem $P||C_{\max}$~\cite{graham1969bounds} (if $\forall (t,t')\in\mathcal{T}^2: \alpha_{t, t'}=1$, our problem reduces to $P||C_{\max}$).
Assuming linearity is a common approach when constructing models in operational research or statistics (e.g. linear regressions). Likewise, in selfish load balancing games~\cite{koutsoupias_worst-case_1999}, it is assumed that the cost of each task is the total load of the machine (but their model does not consider types).
We assume that the impact the type has on itself is \emph{normalized} with regards to tasks' sizes, i.e., $\alpha_{t,t}=1$  (although some of our results, notably the PTAS, do not need this assumption).

We denote by \pb ({\sc MinMaxCost with Side Effects}) the problem of finding a partition $P^*$ minimizing the maximum cost $C(P) = \max_i c_i$, with $c_i$ defined by the linear cost function. The partition $P^*$ minimizes the worst performance a task experiences in the system, thus corresponds to the egalitarian fairness.


\section{Complexity and hardness of \pb for $T$ not fixed}\label{sec:complexity}

\pb is NP-hard as it generalizes an NP-hard problem $P||C_{\max}$ when there is only one type.
Our main result is that a polynomially-solvable variant of mupltiprocessor scheduling ($P|p_i=1|C_{\max}$) becomes NP-hard when tasks are of different types. 
Thus types add another level of complexity onto an already NP-complete $P||C_{\max}$.

\begin{proposition}\label{prop:np-complete}
The decision version of \pb is NP-complete, even if all the tasks have unit size, and even if $m=2$. 
\end{proposition}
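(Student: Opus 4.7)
The plan is to reduce from \textsc{Partition}, which is NP-complete: given positive integers $s_1, \dots, s_k$ with $\sum_i s_i = 2B$, decide whether they can be split into two subsets each summing to $B$. Membership in NP is immediate, since, given a partition as certificate, every task's cost can be computed in polynomial time.

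For the reduction I would introduce $k + 2$ types: one \emph{element type} per $s_i$, with exactly one unit task each, and two \emph{witness types} $A$ and $A'$, each having $N$ unit tasks for $N = 2B + 1$. The coefficients are set as follows: $\alpha_{t,t} = 1$ for every type; $\alpha_{i, A} = \alpha_{i, A'} = s_i$ for every element type $i$, so that element tasks contribute weight $s_i$ to the cost of any witness task colocated with them; $\alpha_{A, A'} = \alpha_{A', A} = M$ with $M$ prohibitively large, say $M = N + B + 1$, so that the two witness populations strongly repel each other; all remaining coefficients are $0$. The decision threshold is $K = N + B$, and every parameter is polynomial in the \textsc{Partition} input.

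The forward direction is easy: given a \textsc{Partition} solution $T_1 \cup T_2$, I place all $N$ type-$A$ tasks together with $T_1$ on machine $1$ and all $N$ type-$A'$ tasks together with $T_2$ on machine $2$. Each type-$A$ task then sees cost $N + \sum_{i \in T_1} s_i = N + B$, symmetrically for $A'$, while element tasks have cost $1$ by the zero couplings, so the bound $K$ is met.

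The reverse direction is what I expect to be the main obstacle, and I would prove it in two steps. First, the witness populations must be completely separated: any machine holding both a type-$A$ task and a type-$A'$ task would incur witness cost at least $M > K$; and concentrating both populations on one machine yields cost at least $NM > K$. Hence, up to relabelling, all $N$ type-$A$ tasks sit on machine $1$ and all $N$ type-$A'$ tasks on machine $2$. Second, writing $S_j$ for the total $s_i$-weight of element tasks placed on machine $j$, the witness costs collapse to $N + S_1$ and $N + S_2$; the bound $K = N + B$ gives $S_1 \leq B$ and $S_2 \leq B$, and combined with $S_1 + S_2 = 2B$ this forces $S_1 = S_2 = B$, exhibiting a valid \textsc{Partition}. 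The delicate point is verifying that no residual placement of witness or element tasks can circumvent the budget; this follows from the non-negativity of all coefficients together with the chosen size of $M$, which makes even one ``misplaced'' witness task catastrophic.
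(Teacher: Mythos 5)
Your construction is logically sound, but as written the reduction is not polynomial-time, which is fatal for an NP-hardness argument. You create $N = 2B+1$ unit tasks of each witness type, where $2B = \sum_i s_i$; since the integers $s_i$ in a \textsc{Partition} instance are encoded in binary, $B$ can be exponential in the input length, so your \pb instance has exponentially many tasks. \textsc{Partition} is only weakly NP-complete, so a pseudo-polynomial reduction from it establishes nothing. Fortunately the large witness populations are doing no work: a single witness task of each of the types $A$ and $A'$ suffices. With $N=1$, take $M = B+2$ and threshold $K = B+1$; colocating the two witnesses costs at least $1+M > K$, so they must sit on different machines, the witness on machine $j$ then has cost $1+S_j$, and $S_1,S_2\le B$ together with $S_1+S_2=2B$ forces $S_1=S_2=B$, exactly as in your second step. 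With that repair (or any polynomially bounded $N$) the proof goes through.

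Once fixed, your route is genuinely different from the paper's, which is more direct: it creates one unit task per integer $a_i$, each of its own type, sets $\alpha_{i,j}=a_i$ for every $j$ (including $j=i$), and uses threshold $B$, so the cost of any task is simply the sum of the integers assigned to its machine and no witness gadget is needed. Your version is longer but buys one thing the paper's does not: it respects the normalization $\alpha_{t,t}=1$ assumed in the model section, whereas the paper's construction quietly takes $\alpha_{i,i}=a_i\neq 1$. That is a minor but real strengthening worth keeping.
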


\begin{proof}
  (Sketch) Reduction from {\sc Partition}~\cite{garey_computers_1979}. Given a set $S = \{ a_i \}$ of $n$ positive integers summing to $2B$, we build an instance of \pb with $n$ tasks, each of size 1 and each of a different type. For a task $i$, we set its coefficients $\forall j: \alpha_{i,j} = a_i$. Partition of $S$ into two sets each with sum $B$ exists if and only if there exists an allocation $P$ with maximal cost $B$: cost of each task $j$ allocated to a machine $k$ is equal to $c_j = \sum_{i : M_{P,i}=k} \alpha_{i,j} = \sum_{i \in S_k} a_i$.
\end{proof}


\begin{proposition}
\label{prop:inapprox}
\pb is strongly NP-hard, even if all tasks have unit size. Moreover, there is no polynomial time $r$-approximate algorithm for \pb, for any number $r>1$, unless $P=NP$.
\end{proposition}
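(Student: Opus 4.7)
The plan is to reduce from \emph{Graph 3-Coloring}, a problem that is strongly NP-hard and whose input carries no numerical data at all. The coefficients $\alpha$ provide exactly the flexibility needed to encode conflicts between adjacent vertices and, by making the conflict weight a parameter, to blow up the YES/NO gap to an arbitrary factor.

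Given any $r>1$ and a graph $G=(V,E)$ with $|V|=n$, I would build an \pb{} instance with $n$ unit-size tasks, one per vertex, each assigned its own distinct type $t_v$; with $m=3$ machines; with the normalization $\alpha_{t_v,t_v}=1$; with $\alpha_{t_u,t_v}=\alpha_{t_v,t_u}=r+1$ whenever $\{u,v\}\in E$; and with $\alpha_{t_u,t_v}=0$ for every non-edge pair $u\neq v$. The reduction is polynomial-time, and for each fixed $r$ the numerical data is bounded by a constant, so the instance is polynomially bounded --- which is what is needed for strong NP-hardness.

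The correctness argument then splits into a clean dichotomy. If $G$ admits a proper 3-coloring, I assign each color class to its own machine; any two colocated tasks come from the same class and are therefore non-adjacent, so the cost of every task $v$ reduces to its self-contribution $\alpha_{t_v,t_v}=1$, giving optimal cost exactly $1$. If $G$ has no 3-coloring, then in any allocation of $n$ tasks to $3$ machines at least one edge $\{u,v\}\in E$ is monochromatic, and the cost of $v$ is at least $\alpha_{t_v,t_v}+\alpha_{t_u,t_v}=1+(r+1)=r+2$. Thus the optimum equals $1$ on YES instances and is at least $r+2$ on NO instances. A polynomial-time $r$-approximation would return a value at most $r$ on YES and at least $r+2$ on NO, distinguishing the two cases and hence deciding 3-coloring; this contradicts $P\neq\np$.

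The step I expect to require the most care is reconciling the reduction with the paper's normalization assumption $\alpha_{t,t}=1$: keeping the self-term forces the YES-case cost to be $1$ rather than $0$, so the edge coefficient has to be at least $r+1$ (not merely $r$) to guarantee that the NO-case optimum strictly exceeds $r$. That $+1$ slack is the one delicate parameter choice; the remaining verifications --- polynomial reduction size, unit task sizes, and a uniform family of reductions indexed by $r$ --- are routine.
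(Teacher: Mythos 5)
Your proof is correct, and it takes a route that is recognizably the same gadget as the paper's but packaged differently. The paper reduces from \textsc{Partition into Cliques}: one unit task per vertex, one type per vertex, $\alpha_{i,j}=0$ on edges and $\alpha_{i,j}=r$ on non-edges, so that machines must host cliques; because \textsc{PIC} asks whether $k\leq K$ cliques suffice, the paper must generate $K$ separate \pb{} instances (one per candidate machine count) and argue that the answer is ``yes'' iff at least one of them has optimum $1$. You instead reduce from 3-Coloring, which is the complementary view (machines host independent sets rather than cliques), and this buys you a single instance with a fixed $m=3$ and no iteration over the number of machines --- a genuinely tidier reduction with the same YES-cost $1$ versus NO-cost $>r$ dichotomy and the same gap argument defeating any $r$-approximation. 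One small remark: your stated worry that the edge coefficient ``has to be at least $r+1$, not merely $r$'' is overcautious --- with coefficient $r$ the NO-case cost is already at least $1+r>r$, which is exactly the slack the paper exploits --- but choosing $r+1$ is harmless and your argument goes through as written. Your observation that fixing $r$ (e.g., $r=2$) bounds all numbers by a constant, yielding strong NP-hardness with unit sizes, matches the paper's claim as well.
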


\begin{proof}
Let $r>1$. We show that if there is a $r$-approximate algorithm for \pb, the algorithm solves NP-complete   {\sc Partition into cliques, PIC}~\cite{garey_computers_1979}.
In {\sc PIC}, given a graph $G=(V,E)$ and a positive integer $K\leq |V|$, can the vertices of $G$ be partitioned into $k\leq K$ disjoint sets $V_1, V_2, \dots, V_k$ such that, for $1\leq i \leq k$, the subgraph induced by $V_i$  is a complete graph? We assume that $V$ are labeled from $1$ to $|V|$.


 
 
Given an instance of  {\sc PIC},
we create $K$ instances of  \pb.  Let $i\in\{1,\dots, K\}$. The $i$-th instance of \pb is as follows: the number of machines is $m=i$; there are $n=|V|$ tasks, each of a different type (types are labeled from 1 to $|V|$).
All the tasks are of size $1$.
For each type $i$, $\alpha_{i,i}=1$.
For each pair of types $(i,j), i\neq j$: $\alpha_{i,j}= 0$ if $\{i,j\}\in E$ and $\alpha_{i,j}= r$ if $\{i,j\}\notin E$.

We claim that a solution of a \pb instance costs either $1$ or at least $r+1$. 
We also claim that the answer for the instance of {\sc PIC} is ``yes'' if and only if the optimal cost of one of these \pb instances is 1.
Therefore, an $r$-approximate algorithm for \pb will find a solution of cost 1 if it exists
(when there is a solution of cost 1, an $r$-approximate algorithm has to return a solution of cost at most $r$, which is thus necessarily the optimal solution since all the other solutions have a cost of at least $r+1$).
Since $K\leq |V|$, if we assume that our $r$-approximate algorithm runs in polynomial time, then by using it $K$ times we can solve in polynomial time {\sc PIC}, which is an NP-complete problem. This leads to a contradiction, unless $P=NP$.
 
We show that the cost of a solution of each of the \pb instances is either $1$, or at least  $r+1$.
If, on all the machines, for each pair $(i,j)$ of tasks on the same machine we have $\alpha_{i,j}= 0$, then the maximum cost of a task is $1$ (its own size, $1$, times $\alpha_{i,i}=1$). Otherwise, there is a machine with two tasks of types $i$ and $j$ with  $\alpha_{i,j}= r$. The maximum cost is thus at least the cost of task $i$, which is at least $1\times \alpha_{i,i} + 1\times \alpha_{i,j}=1+r$. 

We show that the solution for the instance of {\sc PIC} is ``yes'' if and only if there is a solution of cost 1 for (at least) one of the $|V|$ instances of \pb.
Assume first that there is a solution for {\sc PIC}:
the vertices of $G$ can be partitioned into $k\leq K$ disjoint sets $V_1, V_2, \dots, V_k$
such that, for $1\leq i \leq k$, the subgraph induced by $V_i$  is a complete graph.
We take the $k$-th \pb instance. 
For each $i \in\{1,\dots,k\}$, we assign to machine $M_i$ the tasks corresponding to the vertices of $V_i$.
Since all the tasks on the same machine correspond to a clique in $G$, their coefficients $\alpha_{i,j}$ are all 0 ($i\neq j$). The only cost of a task $i$ is its own size times $\alpha_{i,i}$, that is $1$. Thus, the cost of the optimal solution of the $k$-th instance of \pb is 1.

Likewise, assume that there is a solution of cost 1 for (at least) one of the $|V|$ instances of \pb (wlog, for the  $k$-th instance). Then there is a ``yes'' solution for {\sc Partition into cliques}: since the maximum cost for the instance of \pb is 1, it means that all the values $\alpha_{i,j}$ between tasks on the same machines are 0 (for $i\neq j$) and thus that corresponding vertices form a clique in $G$.
\end{proof}

\section{Approximation for fixed number of types}
The inapproximability proof of the previous section means that we can develop constant-factor approximations only for \pb with a constant number of types (and constant coefficients). We show in this section two approximation algorithms. First, a PTAS running in time ${O}(n^{T (\gamma k)^2})$, and thus mostly of theoretical interest. 
Then we introduce a fast greedy approximation algorithm.

\subsection{A PTAS}\label{sec:ptas}

\begin{algorithm}[!tb]
  \scriptsize
  \SetAlCapFnt{\scriptsize}
  \SetKwInOut{Input}{input}
  \SetKwInOut{Output}{output}

  

    $J' = \emptyset$\;
    \For(\tcp*[h]{round down long tasks}){$j \in J, p_j \geq C/(\gamma k)$}{
      $p_{j'} = p_j - ( p_j \mod C/(\gamma k)^2$ ) \;
      $J' = J' \cup \{j'\}$ \;
    }
\For(\tcp*[h]{glue short tasks to containers}){$t \in T$}{
      $W^{(t)}_s = \sum_{j \in J^{(t)}, p_j < C/(\gamma k)}p_j$ ; \tcp*[h]{load of small tasks of type $t$} \; 
      \While{$W^{(t)}_s > 0$}{
        $p_{j''} = \min(C/(\gamma k), W^{(t)}_s)$ \;
        $J' = J' \cup \{ j'' \}$ ; \tcp*[h]{ $j''$ is a new container} \;
        $W^{(t)}_s = W^{(t)}_s - p_{j''}$ \;
      }
    }
    \lFor{$t \in T$}{ remove from $J'$ $m$ containers of type $t$ }
    $\sigma^{'*}$ = partition of $J'$ by solving (by dynamic programming) $OPT(n_1^{'(1)}, \dots, n_{(\gamma k)^2}^{'(1)}, \dots , n_1^{'(T)}, \dots, n_{(\gamma k)^2}^{'(T)}) = 1 + \min_{s_1^{(1)}, \dots, s_{(\gamma k)^2}^{(T)} \in \mathcal{C}} OPT(n_1^{'(1)} - s_1^{(1)}, \dots, n_{(\gamma k)^2}^{'(T)} - s_{(\gamma k)^2}^{(T)})$\;
    \vspace{-.7em} 
    \lIf{$\sigma^{'*}$ requires more than $m$ machines}{return $\emptyset$}
    $\sigma = \sigma^{'*}$ \;
    \For(\tcp*[h]{add removed containers}){k=1 \KwTo $m$}{
      \lFor{k=1 \KwTo $T$}{$\sigma[k] = \sigma[k] \cup \{ C/(\gamma k) \}$}
    }

    \For(\tcp*[h]{replace containers by small tasks}){k=1 \KwTo $m$}{
      \For{$t \in T$}{
        $i =$ number of type $t$ containers in $\sigma[k]$ \;
        replace $i$ containers by tasks of total load $W$, $iC/(\gamma k) \leq W \leq (i+1)C/(\gamma k)$\;
        }
    }
    replace in $\sigma$ rounded long tasks with original long tasks \;

\caption{\scriptsize{A PTAS for \pb with constant $T$ and $\alpha$}}\label{alg:ptas}
\end{algorithm}

Our PTAS (Algorithm~\ref{alg:ptas}) has a similar structure to the PTAS for $P||C_{\max}$\cite{hochbaum1987using}: the two main differences are the treatment of short tasks (which we pack into containers, and not simply greedy schedule) and the sizing of long tasks.
Our PTAS works even if $\alpha_{i,i}\neq 1$, and $\alpha_{i,j}\neq \alpha_{j,i}$. 
The algorithm uses parameters: $C$, the requested maximum cost; $k$, an integer; and $\gamma = T \alpha_{\max} \Big( 2 + 1/(\min \alpha_{i,i}) \Big)  $ (we assume that $T$ and $\alpha_{i,j}$  are constants). Given $C$, the algorithm either returns a schedule of cost at most $C (1 + 1/k)$, or proves that a schedule of cost at most $C$ does not exist.

The algorithm starts by constructing an instance $I'$ which will form a lower bound for $C$ of the original instance $I$. 
The algorithm partitions tasks into two sets: long tasks of size at least $C / (\gamma k)$; and short tasks. 
Long tasks are rounded down to the nearest multiple of $C / (\gamma k)^2$. 
Short tasks of a single type are ``glued'' into \emph{container tasks} of sizes $C / (\gamma k)$, 
except the last container task which might be shorter (of size $W^{(t)}_s \mod (C / (\gamma k))$, 
where $W^{(t)}_s$ is the load of short tasks of type $t$, 
$W^{(t)}_s = \sum_{j \in J^{(t)}, p_j < C/(\gamma k)}p_j$).
Then, the algorithm reduces the load in container tasks by removing $m$ containers (the shortest one and $m-1$ others) of each type. (Note that if the total load of short tasks of type $t$ is smaller than $m C / (\gamma k)$, there are less than $m$ containers, and they are all removed in this step; later, when reconstructing schedule, the algorithm adds the same number of containers that were removed. We omit this detail from Algorithm~\ref{alg:ptas} to make the code more readable). The resulting instance $I'$ has at most as many tasks and at most as high overall load as the original instance $I$ (the number of tasks and the load does not change only if all the tasks are long and their sizes are multiples of $C/(\gamma k)^2$). 

The algorithm then schedules the lower-bound instance $I'$ using dynamic programming. For a given configuration $n_1^{'(1)}, \dots, n_{(\gamma k)^2}^{'(1)}, \dots , n_1^{'(T)}, \dots, n_{(\gamma k)^2}^{'(T)}$, where $n_i^{'(t)}$ is the number of tasks in $I'$ of type $t$ and size $i C / (\gamma k)^2$, $OPT$ denotes the minimal number of machines needed to schedule the configuration with cost smaller than $C$. To find $OPT$, the dynamic programming approach checks all possible configurations $\mathcal{C}$ of task sizes for a single machine $s_1^{(1)}, \dots, s_{(\gamma k)^2}^{(T)}$ (where $s_i^{(t)}$ denotes the number of tasks) that result in cost smaller than $C$, i.e.:
$  s_1^{(1)}, \dots, s_{(\gamma k)^2}^{(T)} \in \mathcal{C} \Leftrightarrow \forall t \ \text{such that} \sum_i s_i^{(t)} >0:
  \sum_{t'} \sum_{i=1}^{(\gamma k)^2} \alpha_{t',t} s_i^{(t')} iC/(\gamma k)^2  \leq C \text{.}$
If OPT is larger than $m$, the algorithm ends. 
Otherwise, the returned schedule $\sigma^{'*}$ forms a scaffold to build a schedule $\sigma$ for the original instance $I$. 
First, the algorithm adds a container for each type on each machine (this container was removed before the dynamic programming).
Then, the algorithm replaces containers by actual short tasks. 
Assume that $\sigma^{'*}$ scheduled $i-1$ containers of type $t$ on machine $m$; 
the previous step added at most one container. 
The algorithm replaces $i$ containers of a total load $iC/(\gamma k)$ by scheduling unscheduled short tasks of type $t$ with a total load of at least $iC/(\gamma k)$ and at most $(i+1) C/(\gamma k)$ (which is always possible as a short task is shorter than $C/(\gamma k)$).
Finally, the algorithm replaces long tasks that were rounded down by the original long tasks.

\begin{proposition}
The PTAS returns a solution to \pb if and only if there is a solution of \pb of cost at most $C$. Moreover, if such a solution of cost $C$ exists, the cost of the solution returned by the PTAS is at most $C(1 + 1/k)$.
\end{proposition}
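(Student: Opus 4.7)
\emph{Proof plan.} The statement has two halves: an existence equivalence (the DP produces an output iff $I$ admits a schedule of cost at most $C$) and the $(1+1/k)$ multiplicative bound. My plan is to show first that $I'$ is a faithful lower-bound surrogate of $I$, second that the DP solves $I'$ exactly, and third that the reconstruction step inflates costs by at most a factor $1+1/k$.

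For the ($\Leftarrow$) direction of the equivalence, I would start from any schedule $\sigma^*$ of $I$ of cost at most $C$ and build a schedule for $I'$ on the same set of machines with cost still at most $C$: keep each long task on its machine with its rounded-down size (which only shrinks loads and costs), and for each type $t$ and machine $M$ place at most $\lfloor L_t^M/(C/(\gamma k))\rfloor$ full containers of type $t$, where $L_t^M$ is the load of short type-$t$ tasks on $M$ in $\sigma^*$. A fractional-parts count, combined with the fact that $m$ containers per type were removed from $J'$, shows that every remaining container lands somewhere without exceeding its type's load budget on any machine. Since the type-$t$ load on each machine is no larger than in $\sigma^*$, neither is the cost of any task. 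Hence the DP, which enumerates all valid per-machine configurations, must return a schedule on at most $m$ machines. The ($\Rightarrow$) direction is immediate, because the reconstruction always produces a valid schedule of $I$ on the machines used by the DP.

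For the multiplicative bound, I fix a machine $M$ and a type $t$ with at least one task on $M$ in $\sigma$, and compare the cost of any type-$t$ task on $M$ in $\sigma$ with the cost that a type-$t$ task on $M$ has in the DP schedule (bounded by $C$ whenever a type-$t$ container or long task was already on $M$ in $\sigma^{'*}$). The reconstruction contributes three additive increases: (a) un-rounding long tasks, where $p_i-\tilde p_i<C/(\gamma k)^2$ while $\tilde p_i\ge C/(\gamma k)-C/(\gamma k)^2$ gives relative growth below $2/(\gamma k)$, hence a long-task contribution of at most $2C/(\gamma k)$; (b) padding one container of each type onto each machine, at most $T\alpha_{\max}C/(\gamma k)$; (c) replacing $i$ containers of each type by actual short tasks whose total load exceeds the container load by less than $C/(\gamma k)$, another $T\alpha_{\max}C/(\gamma k)$. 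Summing and using $\gamma=T\alpha_{\max}(2+1/\min \alpha_{i,i})$ bounds the overall increase by $C/k$, as required.

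The main obstacle is the delicate case in which short tasks of type $t$ end up on $M$ in $\sigma$ even though no type-$t$ container or long task was on $M$ in $\sigma^{'*}$, so type $t$ reaches $M$ only through the padded container. Then the ``pre-reconstruction'' cost of a type-$t$ task on $M$ is not a priori bounded by $C$, and the bound must be transferred from the cost of some other type $t'$ actually present on $M$ in $\sigma^{'*}$ by exploiting $\alpha_{t,t}\ge\min \alpha_{i,i}$; this is exactly where the $1/\min \alpha_{i,i}$ factor in $\gamma$ does its work. A secondary technicality is the regime $W_s^{(t)}<mC/(\gamma k)$, in which fewer than $m$ containers of type $t$ exist and the container-accounting argument needs a minor case split.
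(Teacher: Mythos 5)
Your proof follows the same skeleton as the paper's. The forward direction keeps each rounded long task on its machine in $\sigma^*$ and packs $\lfloor L^M_t/(C/(\gamma k))\rfloor$ full containers of type $t$ per machine, with the $m$ removed containers absorbing the per-machine fractional losses, so the DP finds a feasible schedule of $I'$ on at most $m$ machines; the reverse direction charges the reconstruction as three additive cost increases. One difference is the accounting for un-rounding long tasks: you charge a \emph{relative} growth of $2/(\gamma k)$ against the pre-reconstruction cost (giving $2C/(\gamma k)$), whereas the paper counts at most $\gamma k/\alpha_{i,i}$ long tasks of type $i$ per machine (from the self-cost constraint) and charges $\bigl(T\gamma k/\min\alpha_{i,i}\bigr)\alpha_{\max}C/(\gamma k)^2$. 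Your three charges sum to $(2+2T\alpha_{\max})C/(\gamma k)$, which is at most $C/k$ with the paper's $\gamma=T\alpha_{\max}(2+1/\min\alpha_{i,i})$ only when $T\alpha_{\max}\ge 2\min\alpha_{i,i}$ (true for $T\ge 2$, not for $T=1$); the paper's accounting is the one its $\gamma$ was designed for, so you would need to either adopt it or slightly enlarge $\gamma$.

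The substantive point is your ``main obstacle.'' You are right that it exists, and the paper's proof does not address it: the paper treats container padding and container replacement purely as \emph{increases} of costs already bounded by $C$, which tacitly assumes that every type receiving short tasks on a machine $M$ was already present on $M$ in $\sigma^{'*}$ and hence already constrained by the DP's feasibility test (which only enforces $\sum_{t'}\alpha_{t',t}L_{t'}(M)\le C$ for types $t$ with positive load on $M$). Your proposed repair, however, does not go through as sketched. The factor $1/\min\alpha_{i,i}$ in $\gamma$ is consumed by the paper in bounding the \emph{number} of long tasks per machine; it is not available for transferring a cost bound from a type $t'$ present on $M$ to a type $t$ that arrives only via the padded container. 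With asymmetric coefficients the DP only guarantees $L_{t'}(M)\le C/\alpha_{t',t'}$, so the new type-$t$ cost $\sum_{t'}\alpha_{t',t}L_{t'}(M)$ can be as large as $\sum_{t'}(\alpha_{t',t}/\alpha_{t',t'})\,C$, a constant factor above $C$ that no $O(C/k)$ additive charge can repair. Closing this requires an argument about \emph{placement} --- e.g., that the leftover short tasks of type $t$ can always be routed to machines already carrying type $t$ in $\sigma^{'*}$, or a strengthened per-machine feasibility test in the DP --- rather than a per-type cost transfer. As it stands, this part of your proof (like the paper's) is incomplete.
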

\ifarxiv
\else
Proofs omitted due to space constraints are in the accompanying technical report~\cite{pascual2017maxcost-techreport}.
\fi
\begin{hideproof}
Assume first that there is an optimal schedule $\sigma^*$ of instance $I$ using $m$ machines and having cost at most $C$. Consider a schedule $\sigma'$ for $I'$ constructed according to $\sigma^*$. Each long task in $\sigma'$ is placed on the same machine as in $\sigma^*$. If $\sigma^*$ executes on a machine a total load $W^{*(t)}_s$ of small tasks of type $t$, this load is replaced by $\lfloor W^{*(t)}_s / (C/(\gamma k) \rfloor$ containers, each of size $C/(\gamma k)$. $\sigma'$ is a valid schedule for $I'$ as it schedules all tasks in $I'$. Moreover, the cost of $\sigma'$ is at most $C$, as the load of each type on each machine is not higher than the corresponding load in $\sigma^*$. As the dynamic programming used in PTAS analyses all possible schedules, it will return a schedule $\sigma^{'*}$ using at most the same number of machines as in $\sigma^*$.

Assume now that the dynamic programming returns a schedule $\sigma^{'*}$ of cost at most $C$. 
By adding a single container on each machine and each type, 
the cost increases by at most $T \alpha_{\max} C/(\gamma k)$. 
By replacing the containers by small tasks, 
the load of each type is increased by at most $C/(\gamma k)$, 
thus the cost is increased by at most $T \alpha_{\max} C/(\gamma k)$. 
Finally, by replacing the rounded-down long tasks by the tasks of the original sizes, 
the size of each long task is increased by at most $C / (\gamma k)^2$. 
As the cost of $\sigma^{'*}$ was at most $C$, and a long task is of size at least $C/(\gamma k)$, 
there are at most $\gamma k / \alpha_{i,i}$ tasks of each type $i$ (as the cost of each type on itself has to be smaller than $C$).
There are thus at most $T \gamma k / (\min \alpha_{i,i})$ long tasks in total. 
The total increase of cost due to long tasks (and the maximal influence between types)  is thus at most $\Big( T \gamma k / (\min \alpha_{i,i}) \Big) \alpha_{\max} \Big( C/(\gamma k)^2 \Big)$.
Consequently, the cost of $\sigma$ is bounded by $C(\sigma) \leq C + \Big( C T \alpha_{\max} / (\gamma k) \Big) \Big(2 + 1 / (\min \alpha_{i,i}) \Big) = C (1 + 1/k)$ (as $\gamma =  T \alpha_{\max} \Big( 2 + 1  / (\min \alpha_{i,i}) \Big)$).
\end{hideproof}

\begin{proposition}
The PTAS runs in time $\mathcal{O}(n^{T (\gamma k)^2})$.
\end{proposition}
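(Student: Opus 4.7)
The plan is to argue that the dynamic programming step dominates the running time and that it takes $\mathcal{O}(n^{T(\gamma k)^2})$ time. The preprocessing and post-processing phases---rounding long tasks down to multiples of $C/(\gamma k)^2$, gluing short tasks into containers of size $C/(\gamma k)$, removing and later reinserting $m$ containers per type, and finally expanding the containers back into original tasks---each scan the input a constant number of times and contribute only an $\mathcal{O}(n)$ additive term.

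First I would count the DP states. A state is a tuple $(n_1^{'(1)},\dots,n_{(\gamma k)^2}^{'(T)})$ of length $T(\gamma k)^2$, where each coordinate counts, in the truncated instance $I'$, the tasks of a particular type $t$ and rounded size $iC/(\gamma k)^2$. Each coordinate is at most the number of tasks in $I'$, which is $\mathcal{O}(n)$: the number of long tasks is trivially at most $n$, and the number of containers of any type $t$ is at most $W_s^{(t)}/(C/(\gamma k)) + 1 = \mathcal{O}(n)$, since the total short load $W_s^{(t)}$ does not exceed $nC$ (otherwise no schedule of cost $C$ can exist) and $\gamma,k$ are constants. Hence the number of reachable states is $(\mathcal{O}(n))^{T(\gamma k)^2} = \mathcal{O}(n^{T(\gamma k)^2})$.

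Next I would bound the work per state. The inner minimization ranges over configurations $(s_1^{(1)},\dots,s_{(\gamma k)^2}^{(T)}) \in \mathcal{C}$ of tasks that fit on a single machine while keeping the cost of every type at most $C$. Since every task in $I'$ has size at least $C/(\gamma k)^2$, a single task of type $t$ already contributes at least $\alpha_{t,t}\cdot C/(\gamma k)^2$ to its own type's cost, so $\sum_i s_i^{(t)} \leq (\gamma k)^2/\min_t \alpha_{t,t}$. Because $T$, $k$ and the coefficients $\alpha$ are treated as constants (with $\min_t \alpha_{t,t}>0$ guaranteed by the normalization $\alpha_{t,t}=1$), the number of valid machine configurations is a constant in $n$, so each DP transition runs in $\mathcal{O}(1)$ time. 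Multiplying by the number of states gives the claim.

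The main obstacle, and the one to get right, is the constant bound on the number of machine configurations: it relies on $\min_t \alpha_{t,t}$ being bounded away from zero so that tasks cannot be packed arbitrarily densely on one machine without exceeding cost $C$. Everything else is essentially a counting exercise over the dimensions of the state vector, with the preprocessing/post-processing rounds absorbed into lower-order terms.
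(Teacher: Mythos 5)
Your proof is correct and follows essentially the same route as the paper's: the non-DP phases are linear, the number of DP states is bounded by $n^{T(\gamma k)^2}$ since each of the $T(\gamma k)^2$ coordinates is at most $n$, and the per-state work is constant because only constantly many single-machine configurations keep every type's cost below $C$. The only (immaterial) difference is that you bound the tasks per type per machine via the minimum task size $C/(\gamma k)^2$, yielding $(\gamma k)^2/\min_t \alpha_{t,t}$, whereas the paper uses the sharper bound $\gamma k/\alpha_{i,i}$ from the fact that every task in $I'$ has size at least $C/(\gamma k)$; both are constants, so the conclusion is unaffected.
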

\begin{hideproof}
$T$ is a constant, and  we can assume than $n\geq m$ since otherwise the optimal solution is trivial: each task goes on a different machine. Thus, the runtime of all loops in Algorithm~\ref{alg:ptas} is bounded by $\mathcal{O}(n)$.  
We upper-bound the cost of the dynamic programming by computing the number of valid entries of the $(n_1^{'(1)}, \dots, n_{(\gamma k)^2}^{'(1)}, \dots , n_1^{'(T)}, \dots, n_{(\gamma k)^2}^{'(T)})$ vector.
This vector has $(T (\gamma k)^2)$~dimensions, and $n_i^{'(t)}\leq n$ for each pair $(i,t)$. Thus there are at most $n^{T (\gamma k)^2}$ distinct vectors. 
For each of these vectors, the algorithm must check at most as many entries as there are possible single machine configurations.
As there are at most $\gamma k / \alpha_{i,i}$ tasks of a type $i$ on a single machine,
there are at most $\Big( 1 + \gamma k/(\min \alpha_{i,i}) \Big)^{T (\gamma k)^2}$ such configurations to check. As $\gamma$, $\alpha_{i,j}$, $T$ and $k$ are constant, the number of configurations to check is a constant; thus, the complexity of the dynamic programming algorithm is dominated by the number of valid entries to check, $\mathcal{O} (n^{T (\gamma k)^2})$.
\end{hideproof}

\subsection{A Greedy list-scheduling approximation}\label{sec:greedy-list}

\greedy is a greedy $\frac{2 T m}{m-T}$-approximate algorithm for \pb with constant number of types.
\greedy groups tasks by \emph{clusters}. All the tasks of the same type are in the same cluster. Two tasks of type $i$ and $j$ are in the same cluster iff their types are compatible ($\alpha_{i,j}\leq 1$ and $\alpha_{j,i}\leq 1$). While minimizing the number of clusters is NP-hard (by an immediate reduction from {\sc Partition into cliques}), any heuristics can be used, as the approximation ratio does not depend on the number of clusters.



Clusters are processed one by one. Each cluster is allocated to at least one, dedicated machine. (We assume that $m$, the number of machines, is smaller than or equal to the number of clusters $K$; $K \leq T$, and in a data center $T$ should be much smaller than $m$).
The algorithm puts tasks from a cluster on a machine until machine load reaches \allowbreak $L_{\max}=\max\{2L,L+p_{\max}\}$ (where $L=(\sum p_i)/(m-T)$ is the average load), then opens the next machine. In practice, rather than fixing the maximum machine load to $L_{\max}$, we do a dichotomic search over $[1, L_{\max}]$ to find the smallest possible threshold leading to a feasible schedule.
The complexity of \greedy with dichotomic search is $\mathcal{O}(T^2 n \log(L_{\max}))$.




\begin{proposition}\label{prop:fillgreedy-approx}
Algorithm \greedy is a  $\frac{2 T m}{m-T}$-approximate algorithm for \pb. 
\end{proposition}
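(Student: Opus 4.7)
The plan is to reduce the cost analysis to a load analysis via the cluster structure, then use a standard list-scheduling counting argument for feasibility, and finally establish two lower bounds on $OPT$ that together yield the claimed ratio.

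First, I would observe that within any single cluster all pairwise coefficients satisfy $\alpha_{t,t'}\leq 1$ by construction, so on any machine produced by \greedy (which is dedicated to a single cluster) the cost $c_i=\sum_j p_j\alpha_{t_j,t_i}$ of a task $i$ is bounded above by the raw machine load $\sum_j p_j$. Hence if every machine has load at most $L_{\max}$, the maximum cost is also at most $L_{\max}$.

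Next, for feasibility I would show that the schedule uses at most $m$ machines. Since the algorithm only opens a new machine when adding the next task would push the load above the threshold, every ``closed'' machine has load at least $L_{\max}-p_{\max}\geq L$ by the definition $L_{\max}=\max\{2L,L+p_{\max}\}$. The total load being $W=L(m-T)$ bounds the number of closed machines by $m-T$; since each of the $K\leq T$ clusters contributes at most one partially filled machine, the total is at most $m$.

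Finally, for the ratio I would derive two lower bounds on $OPT$. The bound $OPT\geq p_{\max}$ follows from $\alpha_{t,t}=1$ and each task's self-contribution. The bound $OPT\geq W/(mT)$ requires a bit more care: on any machine $M$ of an optimal schedule, any task of type $t$ present on $M$ has cost at least $\sum_{j\in M,\,t_j=t}p_j=W^{(t)}_M$ (restrict the cost sum to same-type contributors and use $\alpha_{t,t}=1$), so $W^{(t)}_M\leq OPT$; summing over machines gives $W^{(t)}\leq m\cdot OPT$ for each type, and summing over the $T$ types yields $W\leq mT\cdot OPT$. Substituting these bounds into $L_{\max}=L+\max\{L,p_{\max}\}$ with $L=W/(m-T)$ and splitting on whether $L\geq p_{\max}$ then gives $L_{\max}\leq \tfrac{2Tm}{m-T}OPT$ in both cases.

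The main obstacle is spotting the $W/(mT)$ lower bound: one has to notice that only the same-type contributions to $c_i$ can be cleanly controlled across arbitrary optimal schedules, because the cross-type coefficients are only assumed non-negative. Once that observation is in hand, feasibility is a standard list-scheduling counting argument and the approximation factor drops out of the short case split above.
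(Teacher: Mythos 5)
Your proof is correct and follows essentially the same route as the paper's: cost reduces to load because intra-cluster coefficients are at most $1$, feasibility follows from counting machines with load exceeding $L=W/(m-T)$, and the ratio comes from the two lower bounds $OPT\geq p_{\max}$ and $OPT\geq W/(Tm)$ plus the case split on $\max\{L,p_{\max}\}$. The only (minor) divergence is in deriving $OPT\geq W/(Tm)$: you sum per-type loads over all machines, whereas the paper argues that on the most loaded machine (load at least $W/m$) some type carries at least a $1/T$ fraction of that load; both arguments are valid and equally elementary.
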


\begin{proof}
We first show that the allocation is feasible, i.e. the algorithm uses at most $m$ machines.
Let $m_{used}$ be the number of machines to which at least one task is allocated.
Among these $m_{used}$ machines,
at most $K$ have load smaller than $L$.
Indeed, for each cluster the algorithm allocates tasks to a machine beyond $L$ (as $L_{\max} \geq L + p_{max}$),
unless there are no remaining tasks.
Thus, for each cluster, only the load of the last opened machine can be smaller than $L$.
Thus, the load allocated on these $m_{used}$ machines is at least $(m_{used}-K)L=(m_{used}-K)\frac{W}{m-T}$. Since the total load is $W$, we have $(m_{used}-K)\frac{W}{m-T}\leq W$. Thus $\frac{m_{used}-K}{m-T}\leq 1$, and so $m_{used}-K\leq m-T$. Since $K\leq T$, we have $m_{used}\leq m$. Thus, the allocation returned by \greedy is feasible.

We now show that the cost is $\frac{2 K m}{m-T}$-approximate.
We consider an instance $I$ of \pb. Let $\mathcal{O}$ be an optimal solution of $I$ for \pb, and let $OPT$ be the maximum cost of a task in $\mathcal{O}$. 
Since, for each type $i$, $\alpha_{i,i}=1$, we have $OPT\geq p_{max}$.
Let $L_{max}(\mathcal{O})$ be the maximum load of a machine in $\mathcal{O}$.
Let us consider that this load is achieved on machine $i$.
We have $L_{max}(\mathcal{O})\geq \frac{W}{m}$ (by the surface argument).
Since there are at most $T$ types on machine $i$, there is at least one type which has a load of at least $\frac{L_{max}(\mathcal{O})}{T}$ on machine $i$.
The cost of a task of this type on machine $i$ is thus at least $\frac{L_{max}(\mathcal{O})}{T}$, and therefore $OPT\geq \frac{L_{max}(\mathcal{O})}{T}\geq \frac{W}{Tm}$.

Let $\mathcal{S}$ be the solution returned by \greedy for instance $I$. Let $C(\mathcal{S})$ be the maximum cost of a task in $\mathcal{S}$. Let $L_{max}(\mathcal{S})$ be the maximum load of a machine in $\mathcal{S}$. Since two tasks $i$ and $j$ are scheduled on the same machine only if they belong to the same cluster, i.e. only if $\alpha_{t_i,t_j}\leq 1$, the cost of each task is at most equal to $L_{max}(\mathcal{S})$, and thus $C(\mathcal{S})\leq L_{max}(\mathcal{S})$. Moreover, by construction, we have $L_{max}(\mathcal{S}) \leq \max\{2L,L+p_{max}\}$. We consider the two following cases:
\begin{itemize}[nosep,leftmargin=.1em,labelwidth=*,align=left]
\item case 1: $\max\{L,p_{max}\}=p_{max}$. In this case, $C(\mathcal{S})\leq L_{max}\leq L+ p_{max}= \frac{W}{m-T} + p_{max} = \left( \frac{Tm}{m-T} \right) \frac{W}{Tm} + p_{max}$. Since  $OPT\geq p_{max}$ and $OPT\geq \frac{W}{Tm}$, we have $C(\mathcal{S})\leq (\frac{Tm}{m-T}+1)OPT < \frac{2 T m}{m-T} OPT$.
\item case 2: $\max\{L,p_{max}\}=L$. In this case, $C(\mathcal{S})\leq L_{max}\leq 2L = \frac{2 W}{m-T} \leq 2 \left( \frac{Tm}{m-T} \right) \frac{W}{Tm} \leq \frac{2 T m}{m-T} OPT$ because  $OPT\geq \frac{W}{Tm}$.
\end{itemize}
\end{proof}

\section{Heuristics}\label{sec:heuristics}

We propose a few other algorithms for \pb. These algorithms are fast approximations when $T=2$
\ifarxiv
(see Section~\ref{sec:two-types})
\else
(see~\cite{pascual2017maxcost-techreport}).
\fi
They all use as a subprocedure an algorithm $\mathcal{A}$ for $P||C_{\max}$, such as LPT (used in our experiments). $\mathcal{A}$ uses task's size $p_i$ as task's length.

{\sc SchedMixed} uses $\mathcal{A}$ on all tasks and all machines. Let $\sigma$ be the schedule constructed by $\mathcal{A}$ on $m$ machines with tasks $J$. {\sc SchedMixed}($\mathcal{A}$) returns the partition $P$ of the tasks equal to allocation in $\sigma$ (tasks on $M_i$ in $P$ are the tasks on $M_i$ in $\sigma$).

{\sc SchedJuxtapose} uses $\mathcal{A}$ on all machines for each type separately and then joins the schedules.
Let $\sigma_t$ be the schedule obtained by applying $\mathcal{A}$ on tasks $J^{(t)}$ of type $t$ on $m$ machines.
{\sc SchedJuxtapose} merges schedules reversing the order of machines for every other type, i.e.:
tasks on machine $M_i$ are tasks allocated to $M_i$ in $\sigma_{2k+1}$ and to $M_{m-i+1}$ in $\sigma_{2k}$
(when $\mathcal{A}=LPT$ and with a small number of tasks, the machines with smallest indices have the highest load).

{\sc BestSchedule}$(\mathcal{A}$) returns the partition with the lowest cost among the results of {\sc SchedJuxtapose}($\mathcal{A}$) and {\sc SchedMixed}($\mathcal{A}$).

{\sc GreedyDedicated}$(\mathcal{B}$) separates types into $K$ clusters (as in Section~\ref{sec:greedy-list}).
Clusters do not share machines.
The algorithm runs a subprocedure $\mathcal{B}$ ({\sc SchedMixed}, {\sc SchedJuxtapose} or {\sc BestSchedule}) to put tasks of $k$-th cluster onto $m_k$ machines.
{\sc GreedyDedicated} returns the allocation with the minimal cost over all possibilities of assigning $[ m_k ]$ to clusters (by exhaustive search over $[m_k]: \sum_{k \in K} m_k=m$).


Let $C_\mathcal{A}$ be the complexity of Algorithm $\mathcal{A}$. Algorithm {\sc SchedMixed} is in $O(C_\mathcal{A})$;  {\sc SchedJuxtapose} and {\sc BestSchedule} are in $O(T C_\mathcal{A})$; {\sc GreedyDedicated} is in $O(K m^K C_\mathcal{A})$.

\ifarxiv
\section{Two Types}\label{sec:two-types}
To show the influence of the coefficient $\alpha$ on the shape of the optimal schedule, we study in this section a series of special cases with only two types $T=2$ and a symmetric coefficient $\alpha=\alpha_{t',t}=\alpha_{t,t'}$.
We distinguish three cases based on coefficient $\alpha$: \emph{compatible} types when $\alpha \leq 1$; \emph{incompatible} when $1 < \alpha < 2$; and \emph{clashing} when $\alpha \geq 2$.\fixme{F: I changed the categorie of $\alpha=2$ since when $\alpha=2$ there is at most one shared machine - the example given with the $m$ tasks of length $p$ and with the small tasks of length $\varepsilon$ does not work in this case; on the contrary the proof of prop 4.6 still works when $\alpha=2$}\fixme{F: I also changed the divisible load example, so that it works for all the divisible load instances, and not only when the loads of type 1 and 2 are equals.}

To illustrate the difference between \emph{compatible} and \emph{incompatible} types, we consider divisible loads instances. The definition follows the one used in scheduling: if $W_i$ is the load of the tasks of type $i\in\{1,2\}$, this load can be assigned in any way to the machines (as if it was composed of a huge number of tiny tasks of size $\varepsilon$).

When $\alpha < 1$, the types are \emph{compatible}. 
A task from a different type results in a smaller cost than a task from the same type. 
Thus, in the optimal schedule for $I$ all machines are shared between the two  types: each machine executes a load $W_i/m$ of type $i$.  
When $\alpha=1$, any schedule with load $(W_1+W_2)/m$ on each machine is optimal (the impact of a task of type 1 or 2 on another task is the same).  

When $\alpha > 1$, a task from a different type results in a larger cost than a task from the same type. Thus, in a divisible load instance, there is at most one machine that is shared between the two types (if there were more than one shared machine, we could reduce the cost by exchanging tasks between the machines).

To distinguish the cases of \emph{incompatible} and \emph{clashing} types,
we now focus on non divisible load instances. When $\alpha<2$, there can be up to $m$ machines which have to be shared in an optimal solution, as in the following instance. There are $m$ tasks of each type: 
type 1 has only long tasks (of length $p$); 
type 2 has only short tasks (of length $\varepsilon$).
In a schedule with $m$ shared machines, the maximum cost is $\alpha p + \varepsilon$.
In a schedule with less than $m$ shared machines, 
at least one machine executes two tasks of type 1,
so the maximum cost is at least $2 p$. 
Thus, if $\alpha < 2$, and if $\varepsilon$ is sufficiently small, a schedule with $m$ shared machines has a lower cost. On the contrary, we prove in Proposition~\ref{prop:clashing-1mach} that if $\alpha \geq 2$ there is always at most a single shared machine. 

\subsection{Compatible types ($\alpha\leq 1$)}\label{sec:two-types-compatible}



\begin{proposition}
\label{lemma1}
Let $\mathcal{A}$ be a $O(X)$, $(1+\varepsilon)$-approximate algorithm for $P||C_{max}$.  Algorithm {\sc SchedJuxtapose}($\mathcal{A}$) is a $O(X)$,  $(1+\varepsilon)(1+\alpha)$-approximate algorithm for \pb for $T=2$ and $\alpha \leq 1$. 
\end{proposition}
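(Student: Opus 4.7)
The plan is to bound the cost of the schedule produced by \textsc{SchedJuxtapose}($\mathcal{A}$) in terms of the per-type makespans produced by $\mathcal{A}$, and then relate those to $OPT$ for \pb.

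First I would set up notation: let $\sigma_t$ be the schedule that $\mathcal{A}$ produces for the tasks $J^{(t)}$ of type $t$ on $m$ machines, and let $W_t^{(i)}$ denote the total load of type $t$ on machine $M_i$ after the juxtaposition step. By definition, $\max_i W_t^{(i)}$ equals the makespan of $\sigma_t$ (the reversal permutation only relabels machines, so it does not affect the multiset of per-machine loads).

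The key step is bounding the cost of an arbitrary task. On machine $M_i$, a task of type $t \in \{1,2\}$ incurs cost $W_t^{(i)} + \alpha\, W_{3-t}^{(i)}$. Since $\alpha \leq 1$, this is at most $(1+\alpha)\max\bigl(W_1^{(i)}, W_2^{(i)}\bigr)$. Maximising over $i$ and $t$ gives
\[
C(\textsc{SchedJuxtapose}) \;\leq\; (1+\alpha)\,\max\bigl(\mathrm{makespan}(\sigma_1),\, \mathrm{makespan}(\sigma_2)\bigr).
\]
Since $\mathcal{A}$ is $(1+\varepsilon)$-approximate for $P||C_{\max}$, each $\mathrm{makespan}(\sigma_t) \leq (1+\varepsilon)\,OPT_t$, where $OPT_t$ is the optimal makespan for the sub-instance $J^{(t)}$ on $m$ machines.

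The step I expect to be slightly subtle (but short) is showing $OPT_t \leq OPT$, where $OPT$ is the optimal \pb cost on the full instance. I would argue it directly: take an optimal \pb schedule $\mathcal{O}$ and restrict it to tasks of type $t$; this yields a valid schedule of $J^{(t)}$ on $m$ machines whose makespan is the maximum type-$t$ load on any machine in $\mathcal{O}$. Since $\alpha_{t,t}=1$, a task of type $t$ on the maximally type-$t$-loaded machine has cost at least that load, so this makespan is at most $OPT$. Hence $OPT_t \leq OPT$. Combining gives
\[
C(\textsc{SchedJuxtapose}) \;\leq\; (1+\alpha)(1+\varepsilon)\, OPT,
\]
as required.

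For the runtime, \textsc{SchedJuxtapose} invokes $\mathcal{A}$ twice (once per type) and performs a linear-time merge with a reversal, so its complexity remains $O(X)$ (treating $T=2$ as a constant).
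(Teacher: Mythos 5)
Your proposal is correct and follows essentially the same route as the paper: bound the cost on each machine by the two per-type loads, use the $(1+\varepsilon)$ guarantee of $\mathcal{A}$ on each type's sub-instance, and lower-bound $OPT$ by each type's optimal makespan via the restriction argument using $\alpha_{t,t}=1$. Your per-machine bound $(1+\alpha)\max\bigl(W_1^{(i)},W_2^{(i)}\bigr)$ is in fact a slightly cleaner way to assemble the final inequality than the paper's ``wlog $C_1\leq C_2$'' step, which as written only explicitly bounds tasks of one type, but the two arguments are the same in substance.
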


\begin{hideproof}
Let $I$ be an instance of \pb. Let $P$ be the partition returned by algorithm {\sc SchedJuxtapose}($\mathcal{A}$) for instance $I$. Let $Cost(P)$ be the cost of $P$.  
Let $P^*$ be an optimal solution of instance $I$ for \pb, and let $OPT$ be the cost of $P^*$.   Let $C^*_1$ (resp. $C^*_2$) be the makespan of an optimal schedule of $I_1$ (resp. $I_2$) for problem $(P||C_{max})$. Let $C_1$ (resp. $C_2$) be the makespan of schedule $S_1$ (resp. $S_2$). We have: $C_1\leq (1+\varepsilon) C^*_1$ and  $C_2\leq (1+\varepsilon) C^*_2$ since $\mathcal{A}$ is a  $(1+\varepsilon)$-approximate algorithm for problem $(P||C_{max})$.  Moreover, we have $C^*_1\leq OPT$ since in $P^*$ all the tasks of $I_1$ are partitioned into at most $m$ subsets (machines): the maximum load of tasks of type 1 on a same machine in $P^*$ ois at least $C^*_1$. Since the cost of a task of type $1$ is at least the load of the tasks of type 1 on the same machine (because $\alpha_{1,1}=1$), we have $OPT\geq C^*_1$. 
Likewise, we have $OPT\geq C^*_2$. 
Let us assume without loss of generality that $C_1\leq C_2$. The cost of $P$ is smaller than or equal to $C_1+\alpha C_2$, since the cost of a task of type 1 is the load of the tasks of type 1 on the same machine (at most $C_1$) plus $\alpha\leq 1$ times the load of the tasks of type 2 on the same machine (at most $C_2$).  Thus $Cost(P)\leq C_1+  \alpha C_2 \leq (1+\varepsilon)( C^*_1 +  \alpha  C^*_2) \leq (1+\varepsilon)(1+\alpha) OPT$.
\end{hideproof}
 

\begin{proposition}
\label{lemma2}
Let $\mathcal{A}$ be a  $(1+\varepsilon)$-approximate algorithm for $P||C_{max}$.  {\sc SchedMixed}($\mathcal{A}$) is a  $\frac{2(1+\varepsilon)}{1+\alpha}$-approximate algorithm for \pb for $T=2$ and $\alpha \leq 1$. 
\end{proposition}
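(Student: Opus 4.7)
The plan is to combine a simple upper bound on $Cost(P)$ with a matching lower bound on $OPT$, both expressed in terms of the optimum makespan $C^*_{\text{mix}}$ of the auxiliary $P||C_{\max}$ instance that {\sc SchedMixed} feeds to $\mathcal{A}$ (all tasks of both types together, sizes as lengths).

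For the upper bound, let $\sigma$ be the schedule produced by $\mathcal{A}$ on this auxiliary instance, and $C_\sigma$ its makespan. By hypothesis $C_\sigma \leq (1+\varepsilon)C^*_{\text{mix}}$. In the partition $P$ returned by {\sc SchedMixed}, the cost of any task $i$ allocated to machine $M$ equals $\sum_{j\in M} \alpha_{t_j,t_i} p_j$. Because $T=2$, $\alpha_{t,t}=1$, and $\alpha_{t,t'}=\alpha\leq 1$, every coefficient is at most $1$, so this cost is bounded by the total load $L(M)=\sum_{j\in M}p_j$ of the machine, which in turn is at most $C_\sigma$. Hence $Cost(P)\leq C_\sigma \leq (1+\varepsilon)C^*_{\text{mix}}$.

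For the lower bound, I would show that on any machine $M$ of any partition, the maximum task cost on $M$ is at least $\tfrac{1+\alpha}{2}L(M)$. If $M$ carries tasks of only one type, that maximum cost equals $L(M)\geq \tfrac{1+\alpha}{2}L(M)$ (since $\alpha\leq 1$). If $M$ carries both types, with loads $L^{(1)}$ and $L^{(2)}$, any type-$1$ task on $M$ has cost $L^{(1)}+\alpha L^{(2)}$ and any type-$2$ task on $M$ has cost $L^{(2)}+\alpha L^{(1)}$; the maximum of these two values is at least their mean, $\tfrac{1+\alpha}{2}(L^{(1)}+L^{(2)})=\tfrac{1+\alpha}{2}L(M)$. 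Applying this bound to the optimal \pb partition $P^*$ and the machine realizing its highest total load $L_{\max}(P^*)$, and noting that $L_{\max}(P^*)\geq C^*_{\text{mix}}$ because $P^*$ is a feasible $P||C_{\max}$ schedule of the same tasks on $m$ machines, yields $OPT\geq \tfrac{1+\alpha}{2}L_{\max}(P^*)\geq \tfrac{1+\alpha}{2}C^*_{\text{mix}}$.

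Chaining both bounds gives $Cost(P)\leq (1+\varepsilon)C^*_{\text{mix}} \leq \tfrac{2(1+\varepsilon)}{1+\alpha}\,OPT$, as required. The one step that requires a small observation rather than pure bookkeeping is the lower-bound inequality $\max(L^{(1)}+\alpha L^{(2)},\,L^{(2)}+\alpha L^{(1)})\geq \tfrac{1+\alpha}{2}L(M)$ via the averaging trick; everything else is a direct consequence of $\alpha\leq 1$, $\alpha_{t,t}=1$, and the definition of the algorithm.
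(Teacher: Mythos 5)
Your proof is correct and follows essentially the same route as the paper's: bound $Cost(P)$ by the makespan of $\mathcal{A}$'s schedule (since all coefficients are at most $1$), and lower-bound $OPT$ by $\tfrac{1+\alpha}{2}$ times the load of the most loaded machine of $P^*$, which dominates $C^*_{\max}$. The only cosmetic difference is that you obtain the $\tfrac{1+\alpha}{2}L(M)$ bound by averaging the two per-type costs, whereas the paper substitutes $L_2 = C_{\max}(P^*)-L_1$ and uses $L_1\geq C_{\max}(P^*)/2$ with $\alpha\leq 1$; both yield the same inequality.
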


\begin{hideproof}
Let $I$ be an instance of \pb. Let $P$ be the partition returned by algorithm {\sc SchedMixed}($\mathcal{A}$) for instance $I$. Let $Cost(P)$ be the cost of $P$.  

Let $C_{max}^*$ be the makespan of an optimal solution of problem $(P||C_{max})$ on instance $I$. Let $C_{max}$ be the makespan of the schedule returned by $\mathcal{A}$  on instance $I$. Since $\mathcal{A}$  is a $(1+\varepsilon)$-approximate algorithm for problem $(P||C_{max})$, we have $C_{max}\leq (1+\varepsilon) C_{max}^*$. Moreover, $Cost(P)\leq C_{max}$ since the cost of each task is equal to the load of the tasks of the same type on the same machine times $\alpha$ times  the load of the tasks of the same type on the same machine, and $\alpha\leq 1$. 

Let $P^*$ be an optimal solution of instance $I$ for \pb, and let $OPT$ be the cost of $P^*$.    Let $M_i$ be the most loaded machine in $P^*$ and let $C_{max}(P^*)$ be the makespan of $P^*$ (i.e.  $C_{max}(P^*)$ is equal to the sum of the sizes of the tasks on $M_i$ in $P^*$). Let $L_1$ (resp. $L_2$) be the load of the tasks of type 1 (resp. type 2) on $M_i$ in $P^*$. Wlog, assume that $L_1\geq L_2$. The cost of the tasks of type 1 on $M_i$ is $L_1+\alpha L_2$. Thus, 
$OPT\geq L_1+\alpha L_2=L_1+\alpha(C_{max}(P^*)-L_1)\geq \frac{C_{max}(P^*)}{2}+\alpha  \frac{C_{max}(P^*)}{2}= (\frac{1+\alpha}{2})  C_{max}(P^*)$. The last inequality holds because $L_1\geq  \frac{C_{max}(P^*)}{2}$ and $\alpha\leq 1$. 

Since $Cost(P)\leq C_{max}\leq  (1+\varepsilon) C_{max}^*$ and $OPT\geq (\frac{1+\alpha}{2})   C_{max}(P^*) \geq  (\frac{1+\alpha}{2})   C_{max}^*$, we have $Cost(P)\leq \frac{2(1+\varepsilon)}{1+\alpha} OPT$.
\end{hideproof}
 
  
{\sc SchedJuxtapose} has lowest approximation for $\alpha$ close to 0, while {\sc SchedMixed} has lowest approximation for $\alpha$ close to 1.

\begin{proposition}
\label{prop:racine2}
Let $\mathcal{A}$ be a  $(1+\varepsilon)$-approximate algorithm for problem $(P||C_{max})$, which runs in $O(X)$.  Algorithm {\sc BestSchedule}($\mathcal{A}$) is a $O(X)$,   $\sqrt(2)(1+\alpha)$-approximate algorithm for \pb for $T=2$ and $\alpha \leq 1$.
\end{proposition}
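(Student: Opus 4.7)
The approach I would take is straightforward, since {\sc BestSchedule}($\mathcal{A}$) simply returns the better of the two partitions produced by {\sc SchedJuxtapose}($\mathcal{A}$) and {\sc SchedMixed}($\mathcal{A}$). I will combine the two upper bounds already established in Propositions~\ref{lemma1} and~\ref{lemma2} using the arithmetic/geometric mean inequality. The runtime claim is immediate: both subroutines run in $O(X)$ (SchedJuxtapose being $O(TX)=O(2X)=O(X)$ for $T=2$), so their maximum is also $O(X)$.

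More concretely, I would first let $P_J$ and $P_M$ denote the partitions returned by {\sc SchedJuxtapose}($\mathcal{A}$) and {\sc SchedMixed}($\mathcal{A}$) respectively, and let $OPT$ be the optimal cost. From Proposition~\ref{lemma1}, $Cost(P_J)\leq (1+\varepsilon)(1+\alpha)\,OPT$, and from Proposition~\ref{lemma2}, $Cost(P_M)\leq \frac{2(1+\varepsilon)}{1+\alpha}\,OPT$. Since {\sc BestSchedule} returns the cheaper of the two, its cost is at most $\min\{Cost(P_J),Cost(P_M)\}$, which is bounded above by the geometric mean of the two quantities: $\sqrt{Cost(P_J)\cdot Cost(P_M)} \leq \sqrt{(1+\varepsilon)(1+\alpha)\cdot \tfrac{2(1+\varepsilon)}{1+\alpha}}\,OPT = \sqrt{2}\,(1+\varepsilon)\,OPT$.

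Equivalently — and this is probably cleaner to present — I would do a case split around the crossover point $1+\alpha=\sqrt{2}$. If $1+\alpha\leq \sqrt{2}$, apply the Juxtapose bound to get $(1+\varepsilon)(1+\alpha)\leq \sqrt{2}(1+\varepsilon)$; otherwise apply the Mixed bound to get $\frac{2(1+\varepsilon)}{1+\alpha}< \sqrt{2}(1+\varepsilon)$. In either case the resulting cost is at most $\sqrt{2}(1+\varepsilon)\,OPT$, which also yields the stated $\sqrt{2}(1+\alpha)$ bound since $1+\alpha\geq 1\geq \varepsilon$ in the regime of interest (I would note that the tighter form $\sqrt{2}(1+\varepsilon)$ appears to be what the construction actually gives, and may even be the intended statement).

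There is no real obstacle here: once the two earlier propositions are in hand, this is just a min-of-two-bounds argument. The only thing to be a little careful about is making the case split or the geometric-mean step explicit, and confirming the complexity accounting for running both subroutines in sequence. No new combinatorial reasoning about the structure of the schedule is needed.
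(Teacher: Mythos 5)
Your proof is correct and takes essentially the same route as the paper: the paper also bounds {\sc BestSchedule} by the minimum of the two ratios from Propositions~\ref{lemma1} and~\ref{lemma2}, locates the crossover at $\alpha=\sqrt{2}-1$, and concludes a $\sqrt{2}(1+\varepsilon)$ approximation, with the same runtime accounting. Your observation that the bound actually obtained is $\sqrt{2}(1+\varepsilon)$ rather than the stated $\sqrt{2}(1+\alpha)$ is also consistent with the paper's own proof, which indeed derives $\sqrt{2}(1+\varepsilon)$.
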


\begin{hideproof}
The approximation ratio of {\sc SchedJuxtapose}($\mathcal{A}$) is  $(1+\varepsilon)(1+\alpha)$ (cf. Proposition \ref{lemma2}). The approximation ratio of {\sc SchedMixed}($\mathcal{A}$) is $\frac{2(1+\varepsilon)}{1+\alpha}$ (cf. Proposition \ref{lemma1}). Thus, the approximation ratio of algorithm {\sc BestSchedule}($\mathcal{A}$) is $\min \{(1+\varepsilon)(1+\alpha), \frac{2(1+\varepsilon)}{1+\alpha}   \}$. 
The maximum is achieved when $(1+\varepsilon)(1+\alpha) = \frac{2(1+\varepsilon)}{1+\alpha}$
$\Rightarrow (1+\alpha) = \frac{2}{1+\alpha}
\Rightarrow (1+\alpha)^2=2
\Rightarrow \alpha^2 + 2\alpha -1 =0$
Since $\alpha\geq 0$, this means that $\alpha = \frac{-2+\sqrt{8}}{2} =\sqrt{2}-1$. The maximum approximation ratio is obtained for $\alpha=\sqrt{2}-1$. It is thus $(1+\varepsilon)(1+\alpha)=\sqrt{2}(1+\varepsilon)$.Therefore, {\sc BestSchedule}($\mathcal{A}$) is $\sqrt{2}(1+\varepsilon)$-approximate. 
Furthermore, if $\mathcal{A}$ runs in $O(X)$, then {\sc SchedJuxtapose}($\mathcal{A}$) also runs in $O(X)$ since it runs twice algorithm $\mathcal{A}$, and algorithm {\sc SchedMixed}($\mathcal{A}$) also runs in $O(X)$ since it simply runs once algorithm $\mathcal{A}$. Therefore, algorithm {\sc BestSchedule}($\mathcal{A}$), which runs once  {\sc SchedJuxtapose}($\mathcal{A}$) and once {\sc SchedMixed}($\mathcal{A}$) also runs in $O(X)$. 
\end{hideproof}

\begin{corollary}
\label{prop:lpt}
Let ${LPT}$ be the algorithm which greedily schedules $n$ tasks from the longest one on parallel machines.  Algorithm {\sc BestSchedule}($LPT$) runs in $O(n \log n)$ and has an approximation ratio of   $\frac{4\sqrt{2}}{3}<1.89$ for \pb.
\end{corollary}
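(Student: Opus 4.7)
The plan is to invoke Proposition~\ref{prop:racine2} directly, after plugging in the known approximation ratio of $LPT$ for $P||C_{\max}$. The first step is to recall Graham's classical result: $LPT$ is a $(4/3 - 1/(3m))$-approximation for $P||C_{\max}$, hence in particular $(1+\varepsilon)$-approximate with $\varepsilon = 1/3$. This ratio immediately gives, via Proposition~\ref{prop:racine2}, that {\sc BestSchedule}($LPT$) is $\sqrt{2}(1+\varepsilon) = \sqrt{2} \cdot 4/3 = 4\sqrt{2}/3$ approximate, and one checks $4\sqrt{2}/3 < 1.89$ numerically.

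For the runtime part, I would note that $LPT$ runs in $O(n \log n)$ (sorting the tasks by non-increasing size dominates, and then the assignment to the least-loaded machine takes $O(n \log m) = O(n \log n)$ with a heap). Since {\sc BestSchedule} runs {\sc SchedJuxtapose} once and {\sc SchedMixed} once, each of which calls $LPT$ a constant number of times (at most $T = 2$ in our setting), the total time is still $O(n \log n)$ by the complexity analysis at the end of Section~\ref{sec:heuristics}. Taking the minimum cost between the two returned partitions adds only a single cost comparison.

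There is essentially no obstacle here: the result is a direct specialization of Proposition~\ref{prop:racine2}, with the only nontrivial ingredient being Graham's bound on $LPT$, which is standard. The only thing to be a bit careful about is confirming that the approximation ratio of $LPT$ for $P||C_{\max}$ is indeed at most $4/3$ (and not the slightly better $4/3 - 1/(3m)$ which would only improve our bound), so stating $\varepsilon = 1/3$ is safe.
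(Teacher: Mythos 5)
Your proposal is correct and matches the paper's intent exactly: the corollary is an immediate specialization of Proposition~\ref{prop:racine2} using Graham's bound $4/3 - 1/(3m) \leq 4/3$ for $LPT$, giving $\sqrt{2}\cdot\tfrac{4}{3} = \tfrac{4\sqrt{2}}{3} < 1.89$, with the $O(n\log n)$ runtime inherited from the constant number of $LPT$ calls. The only detail worth making explicit is that, as with Proposition~\ref{prop:racine2}, the bound holds for $T=2$ and $\alpha\leq 1$ (the setting of this subsection), not for \pb in full generality.
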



\subsection{Incompatible types ($1 < \alpha < 2$)}\label{sec:incompatible}
In this section, we show that {\sc SchedMixed}
is a $\alpha(1+\varepsilon)$ approximation when $1 < \alpha \leq 2$. We also show \lptbytypes, a 2-approximate greedy algorithm running in $O(n)$.


\begin{proposition}
\label{prop:2-approx-incompatible}
Let $\mathcal{A}$ be a  $(1+\varepsilon)$-approximate algorithm for problem $(P||C_{\max})$, which runs in $O(X)$. This algorithm returns an $\alpha(1+\varepsilon)$-approximate solution for \pb in $O(X)$, for $T=2$ and $1<\alpha\leq2$. 
\end{proposition}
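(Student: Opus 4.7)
The plan is to sandwich the cost of the schedule $P$ returned by $\mathcal{A}$ between $\alpha C_{\max}(P)$ from above and $OPT$ from below, using the fact that for $T=2$ and $\alpha \geq 1$ every coefficient $\alpha_{t,t'}$ lies in $[1, \alpha]$. This makes the makespan and the max cost comparable up to a factor of $\alpha$ on any partition.

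First I would establish the upper bound. For any task $i$ placed on machine $M$ by $P$, the cost is $c_i = \sum_{j \in M} p_j \alpha_{t_j, t_i}$. Since every coefficient is at most $\alpha$ (the diagonal ones equal $1 \leq \alpha$ and the off-diagonal equal $\alpha$), we get $c_i \leq \alpha \sum_{j\in M} p_j = \alpha L(M) \leq \alpha\, C_{\max}(P)$. Because $\mathcal{A}$ is a $(1+\varepsilon)$-approximation for $P||C_{\max}$, $C_{\max}(P) \leq (1+\varepsilon)\, C^*_{\max}$, where $C^*_{\max}$ is the optimal makespan for the underlying $P||C_{\max}$ instance. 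Hence $Cost(P) \leq \alpha(1+\varepsilon)\, C^*_{\max}$.

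Next I would show the matching lower bound $OPT \geq C^*_{\max}$. Let $P^*$ be any optimal partition for \pb. On its most loaded machine $M$, pick an arbitrary task $i \in M$; since every $\alpha_{t_j,t_i} \geq 1$ (using $\alpha \geq 1$ and $\alpha_{t,t}=1$), we have $c_i \geq \sum_{j\in M} p_j = C_{\max}(P^*)$. Therefore $OPT \geq C_{\max}(P^*) \geq C^*_{\max}$, the last inequality because $C^*_{\max}$ is the smallest makespan over all partitions. Combining the two bounds, $Cost(P) \leq \alpha(1+\varepsilon)\, C^*_{\max} \leq \alpha(1+\varepsilon)\, OPT$, and the running time equals that of $\mathcal{A}$ applied once.

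There is no real obstacle: the argument is a direct two-sided comparison between the \pb cost and the $P||C_{\max}$ load on a single machine, and it works precisely because $1 \leq \alpha_{t,t'} \leq \alpha$ for all pairs when $T=2$ and $\alpha \geq 1$. The only subtlety worth flagging in the write-up is why $OPT \geq C^*_{\max}$ (it requires $\alpha \geq 1$ and is false in the compatible regime), which justifies why this very short argument is confined to the incompatible case.
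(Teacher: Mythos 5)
Your proof is correct and follows essentially the same route as the paper, which only sketches this result by reference to the compatible-types argument: you bound the cost from above by $\alpha$ times the makespan of the returned schedule, and from below observe that since every coefficient is at least $1$ the optimal \pb cost is at least the optimal makespan. Your write-up usefully fills in the details the paper omits, including the remark that the lower bound $OPT \geq C^*_{\max}$ is exactly where $\alpha \geq 1$ is needed.
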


\begin{hideproof}
(sketch) Similar to the proof of Proposition~\ref{lemma2}: as machines are shared, the cost is within $\alpha$ of $C_{\max}$.
\end{hideproof}


We now define a fast greedy algorithm for \pb called \lptbytypes. The algorithm is similar to \greedy, filling machines until a treshold and opening a new machine for the second type: the difference is that one machine can be shared between types.
\lptbytypes assigns first all the tasks of type 1, and then all the tasks of type 2.
Tasks gradually fill machines (as in \greedy): a task is assigned to the current machine if the resulting total load on that machine is at most $L=W/m + \max\{W/m, p_{\max}\}$; otherwise, a new machine is opened.
Also, a new machine is opened for the first task of the second type.
When the algorithm tries to open machine $m+1$, instead all the remaining tasks are assigned to the last machine the first type used.

\begin{proposition}
\label{prop:lpt-by-types}
Algorithm \lptbytypes is a $O(n)$,  2-approximate algorithm for \pb for $T=2$ and $1<\alpha \leq 2$.
\end{proposition}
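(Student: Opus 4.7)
The plan is to parallel the structure of Proposition~\ref{prop:fillgreedy-approx}. The runtime $O(n)$ is immediate: \lptbytypes makes a single pass with one threshold check per task, and the overflow rule reduces to one reassignment. Feasibility (at most $m$ machines are used) is baked into the algorithm by definition of overflow.

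For the approximation ratio I would first establish two lower bounds on $OPT$. Since $\alpha_{t,t}=1$ every task contributes its own size to its own cost, so $OPT\geq p_{\max}$. Since $\alpha\geq 1$, on any machine of the optimal schedule the cost of any task is at least the total load of that machine, so the surface argument gives $OPT\geq L_{\max}(P^*)\geq W/m$. Together these yield $L=W/m+\max\{W/m,p_{\max}\}\leq 2\,OPT$. On any machine that ends up carrying tasks of a single type the cost of every task equals the machine's load, which is bounded by $L\leq 2\,OPT$; this already settles the entire no-overflow case as well as every non-shared machine in the overflow case.

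The only nontrivial step, and the main obstacle, is bounding the cost on the (unique) shared machine when overflow is triggered. The key claim I would prove is that the shared machine's total load is strictly less than $W/m$. Let $k_1$ be the index of the last type-1 machine (onto which the overflow is dumped). Each of the $m-1$ other machines was closed by the greedy rule only after its load surpassed $L-p_{\max}\geq W/m$, using the identity $L-W/m=\max\{W/m,p_{\max}\}\geq p_{\max}$. This bound applies uniformly to the closed type-1 machines $1,\dots,k_1-1$, to the fully filled type-2 machines $k_1+1,\dots,m-1$, and to the final type-2 machine $m$ (whose load exceeds $L$ minus the size of the overflow-triggering task, which is at most $p_{\max}$). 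The subcase $k_1=m$ (where type~2 never opens its own machine and is entirely dumped) is covered by the same accounting with an empty type-2 block. Summing these lower bounds gives aggregate load strictly above $(m-1)W/m$ on the non-shared machines, hence strictly less than $W/m$ on the shared one. Finally, using $\alpha\geq 1$, the cost of any task on the shared machine is $L_1'+\alpha L_2'\leq \alpha(L_1'+L_2')<\alpha\cdot W/m\leq 2\,W/m\leq 2\,OPT$, where $\alpha\leq 2$ is used in the last step. Combining with the bound for the non-shared machines completes the proof.
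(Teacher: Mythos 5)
Your proof is correct and follows essentially the same route as the paper's: bound every single-type machine by its load $L\leq 2\,OPT$, and show that the unique shared (overflow) machine carries total load below $W/m$ because each of the other $m-1$ machines was closed only after its load exceeded $L-p_{\max}\geq W/m$, so its cost is at most $\alpha W/m\leq 2\,OPT$. You are in fact slightly more explicit than the paper about the edge subcases and about why $OPT\geq W/m$ (via $\alpha\geq 1$), but the decomposition and the key load-counting argument are the same.
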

\begin{hideproof}
The only machine which might have tasks of both types is $M_1$.
Moreover, on all the machines except $M_1$, the load is at most $L=W/m+\max\{W/m, p_{\max}\}$.
Let $OPT$ be the cost of an optimal solution for \pb.
As in proof of Proposition~\ref{prop:fillgreedy-approx},
$OPT\geq p_{\max}$ and $OPT \geq W/m$.
Thus, the cost of each task executing on machine different than $M_1$ is at most $L<2 OPT$.
\\
We now show that the cost of a task assigned to $M_1$ is also at most $2OPT$.
If there are only tasks of type 1 on $M_1$,
then the load of this machine is at most $L$ (otherwise, the total allocated load of type 1 would be greater than W),
thus the cost of the tasks on $M_1$ is at most $L<2 OPT$.
If there are tasks of type 2 on $M_1$,
on all other machines the load is larger than $W/m$.
Thus the load on $M_1$ is smaller than $W-(m-1)W/m=W/m$.
Therefore, the cost of a task on $M_1$ is smaller than $\alpha W/m \leq 2 OPT$ since $\alpha\leq 2$.
Hence, the solution returned by \lptbytypes is 2-approximate.
%
%
\end{hideproof}

\subsection{Clashing types ($\alpha \geq 2$)}\label{sec:two-types-clashing}
For large coefficients, we show that an optimal solution uses at most one shared machine. We then use this result to show that an algorithm that uses no shared machines is a $(1 + \frac{1}{1+\alpha})(1+\varepsilon)$ approximation.

\begin{proposition}
If $\alpha \geq 2$, an optimal solution uses at most one shared machine.\label{prop:clashing-1mach}
\end{proposition}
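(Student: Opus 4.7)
The plan is to prove the claim by a local exchange argument: if an optimal allocation has two or more shared machines, I would show how to transform it into an allocation of the same cost with strictly fewer shared machines, hence eventually into an optimum with at most one shared machine.

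So, suppose for contradiction that some optimum $\sigma^*$ of cost $C$ uses two shared machines $M_a$ and $M_b$; denote by $L_t^x \geq 0$ the total load of type $t\in\{1,2\}$ on machine $x\in\{a,b\}$, all strictly positive (since the machines are shared). Writing out the upper bound on the cost of each task on each of these machines gives four inequalities, in particular for the type-$2$ tasks:
\[
\alpha L_1^a + L_2^a \leq C, \qquad \alpha L_1^b + L_2^b \leq C.
\]
Adding them yields $\alpha(L_1^a+L_1^b) + (L_2^a+L_2^b) \leq 2C$, so $L_1^a+L_1^b \leq 2C/\alpha \leq C$ since $\alpha \geq 2$. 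By the symmetric computation on the type-$1$ inequalities, $L_2^a+L_2^b\leq C$.

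I would then define $\sigma'$ obtained from $\sigma^*$ by reassigning all type-$1$ tasks of $M_a$ and $M_b$ to $M_a$, and all type-$2$ tasks of $M_a$ and $M_b$ to $M_b$ (tasks on other machines are untouched). Since the total number of machines used does not increase, $\sigma'$ is a valid allocation. The new cost on $M_a$ is $L_1^a+L_1^b \leq C$ and the new cost on $M_b$ is $L_2^a+L_2^b \leq C$; all other machines are unchanged, so $\sigma'$ still has cost at most $C$, hence is also optimal. However, $M_a$ and $M_b$ are no longer shared, so $\sigma'$ has at least two fewer shared machines than $\sigma^*$. Iterating the transformation terminates with an optimal allocation using at most one shared machine.

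The main technical point is the inequality $L_1^a+L_1^b \leq C$ (and its symmetric counterpart), which is exactly where the hypothesis $\alpha \geq 2$ is used; once this is in hand, the rest of the argument is a straightforward accounting check. The only subtlety to mention is that the transformation is well-defined even when $M_a$ or $M_b$ are emptied of tasks of one type (we simply leave that portion of the machine empty), and that it preserves the total set of tasks, so no task is lost.
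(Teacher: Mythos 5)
Your proof is correct, but it takes a genuinely different (and in one respect cleaner) route than the paper's. The paper argues by contradiction with a WLOG labelling ($a+a'\geq b+b'$, $a\geq a'$) and a case split on $a\geq b$ versus $a<b$, comparing the shared optimum against the two-dedicated-machines alternative of cost $a+a'$ and deriving $\alpha a < 2a$. You instead sum the two per-machine cost inequalities for each type, which gives $L_1^a+L_1^b\leq 2C/\alpha\leq C$ and $L_2^a+L_2^b\leq 2C/\alpha\leq C$ directly, with no case analysis and no WLOG. Two further points in favour of your version. First, it is an explicit exchange argument proving the existential reading of the statement (``there exists an optimal allocation with at most one shared machine''), which is exactly what the downstream approximation result for {\sc GreedyDedicated} needs. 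Second, it sidesteps a subtlety in the paper's contradiction: to conclude $b+\alpha a\leq a+a'$ from optimality one must implicitly assume the maximum cost is attained on one of the two shared machines; if some third machine dominates the cost (e.g.\ a single huge task elsewhere), an optimal allocation can in fact contain two shared machines with tiny loads, so the universal reading fails and only the existential one---the one you prove---holds. The one thing to tighten in your write-up: state explicitly that the new cost on $M_a$ equals $L_1^a+L_1^b$ because $\alpha_{1,1}=1$ and the machine now carries a single type, and that a machine carrying only one type (or left empty) is by definition not shared, so the count of shared machines drops by exactly two per iteration and the process terminates.
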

\begin{hideproof}
The proof is by contradiction. Assume that in the optimal solution there are at least 2 shared machines, $k$ and $l$. 
Machine $k$ executes load of $a = W_k^{(1)}$ of type 1 and $b = W_k^{(2)}$ of type 2; 
machine $k'$ executes load of $a' = W_{k'}^{(1)}$ of type 1 and $b' = W_{k'}^{(2)}$ of type 2. 
Without loss of generality, we label types such that type 1 has higher load ($a + a' \geq b + b'$) and machines such that machine $k$ is allocated most of type 1 load ($a \geq a'$). The cost of a solution with two dedicated machines is $\max( a + a', b + b') = a + a'$.

Assume first that $a \geq b$; thus, cost of tasks on machine $k$ is at most $b + \alpha a$.
By contradiction, a solution with two shared machines is optimal, thus
$b + \alpha a \leq a + a'$. As $b > 0$,\fixme{I replaced $b\geq 0$ by $b>0$, since the machine is necessarily a shared machine - this will lead to Prop 4.6 true even if $\alpha=2$} $\alpha a < a + a'$. As $a \geq a'$, $\alpha a < 2 a$, which leads to a contradiction if $\alpha \geq 2$. 

In the second case, $a < b$. Thus, cost of tasks on machine $k$ is  $a + \alpha b$; and $a + \alpha b > a + \alpha a > \alpha a$. By contradiction, a solution with two shared machines is optimal, thus $\alpha a < a + a'$, which, as in the previous case, leads to a contradiction if $\alpha \geq 2$.
\end{hideproof}

Given an algorithm $\mathcal{A}$ for $P||C_{\max}$, the algorithm {\sc GreedyDedicated}, executes, for each possible value of $m_1 \in \{1,\dots,m-1\}$, algorithm $\mathcal{A}$ twice:
first, for tasks $J^{(1)}$ of type 1 scheduled on the first $m_1$ machines; 
second, for tasks $J^{(2)}$ of type 2 scheduled on the remaining $m-m_1$ machines. 
{\sc GreedyDedicated}, out of $(m-1)$ possibilities, returns the allocation with the smallest makespan (cost).

\begin{proposition}
\label{prop:greedy-clashing}
Given a $(1+\varepsilon)$-approximate algorithm $\mathcal{A}$ for $P||C_{\max}$ which runs in $O(X)$, {\sc GreedyDedicated} is a $O(m X)$, $(1 + \frac{1}{1 + \alpha}) (1 + \varepsilon)$-approximate algorithm for \pb for $T=2$ and $\alpha \geq  2$.
\end{proposition}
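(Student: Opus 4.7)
The plan is to apply Proposition~\ref{prop:clashing-1mach}, which guarantees the optimal solution has at most one shared machine, and then bound the error introduced by forcing all machines to be dedicated. If OPT uses zero shared machines, then some value of $m_1 \in \{1, \dots, m-1\}$ matches OPT's split exactly, and running $\mathcal{A}$ on each side gives a schedule of cost at most $(1+\varepsilon)OPT$ (since on a dedicated machine the cost of any task equals the load of its type, which $\mathcal{A}$ approximates to within $1+\varepsilon$). The interesting case is when OPT has exactly one shared machine.

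Assume OPT uses $m_1^\star$ dedicated machines of type $1$, one shared machine carrying loads $a$ of type $1$ and $b$ of type $2$, and $m_2^\star = m - m_1^\star - 1$ dedicated machines of type $2$. Without loss of generality assume $a \geq b$. The cost of a type-$2$ task on the shared machine is $\alpha a + b \geq (1+\alpha)b$, so from $\alpha a + b \leq OPT$ I conclude $b \leq OPT/(1+\alpha)$. I will now focus on the iteration of GreedyDedicated that sets $m_1 = m_1^\star + 1$ (so type $2$ receives $m_2^\star$ machines); since GreedyDedicated takes the minimum over all $m_1$, its output is no worse.

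For this iteration, the type-$1$ subproblem has an offline solution on $m_1^\star + 1$ machines derived from OPT by treating the shared machine as a ``type-$1$ bin'' of load $a$. Each dedicated type-$1$ machine has load $\leq OPT$ (cost equals load since $\alpha_{1,1}=1$), and $a \leq a + \alpha b \leq OPT$. Hence the optimal makespan of this $P||C_{\max}$ instance is at most $OPT$, so $\mathcal{A}$ returns a type-$1$ schedule of makespan at most $(1+\varepsilon)OPT$. For the type-$2$ subproblem on $m_2^\star$ machines, I take OPT's assignment on its $m_2^\star$ dedicated machines and pile the extra load $b$ onto an arbitrary one: this feasible schedule has makespan at most $OPT + b$, so $\mathcal{A}$ returns one of makespan at most $(1+\varepsilon)(OPT + b)$. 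Since each subproblem runs only one type on a dedicated block of machines, costs coincide with loads, and therefore the overall cost of this iteration is at most
\[
(1+\varepsilon)\max\bigl(OPT,\; OPT + b\bigr) \;=\; (1+\varepsilon)(OPT + b) \;\leq\; (1+\varepsilon)\Bigl(1 + \tfrac{1}{1+\alpha}\Bigr)OPT.
\]

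For the running time, GreedyDedicated invokes $\mathcal{A}$ twice for each of the $m-1$ values of $m_1$, so the complexity is $O(mX)$. The main conceptual step is the bound $\min(a,b) \leq OPT/(1+\alpha)$, which is where the hypothesis $\alpha \geq 2$ is used only indirectly (via Proposition~\ref{prop:clashing-1mach} to ensure OPT has at most one shared machine); the rest is bookkeeping on how the additional shared-machine load propagates through the dedicated sub-schedules.
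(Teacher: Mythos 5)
Your argument is essentially the paper's own proof: both rest on Proposition~\ref{prop:clashing-1mach}, bound the smaller load on the unique shared machine by $OPT/(1+\alpha)$ (the paper by summing the two cost inequalities on that machine, you by combining one of them with $a\ge b$ --- equivalent), and then repair the optimum into a dedicated schedule by dumping that load onto a dedicated machine of its type, so the two proofs stand or fall together. The one remark worth making is that both versions implicitly assume the minority type owns at least one dedicated machine in the optimum (your chosen iteration $m_1=m_1^\star+1$ lies outside $\{1,\dots,m-1\}$ precisely when $m_2^\star=0$); in that degenerate case the load that must be relocated is the shared machine's type-1 load, bounded only by $OPT/\alpha$ rather than $OPT/(1+\alpha)$, so this edge case needs a sentence of its own in either write-up.
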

\begin{hideproof}
For $\alpha \geq 2$, there is at most one shared machine in the optimal allocation (Proposition~\ref{prop:clashing-1mach}).
Assume that the shared machine executes load $W^{(1)}$ of type 1 and $W^{(2)}$ of type 2.
The cost on the shared machine must be at most OPT, thus $W^{(1)} + \alpha W^{(2)} \leq OPT$ and $W^{(2)} + \alpha W^{(1)} \leq OPT$.
Thus, $W^{(1)} + W^{(2)} \leq \frac{2OPT}{1 + \alpha}$.
Construct now a schedule $\sigma'$ constructed from $\sigma^*$ that will not use a shared machine. The smaller out of $W^{(1)}$ and $W^{(2)}$ is moved to the first machine dedicated for its type. The cost of $\sigma'$ is at most $OPT (1 + \frac{1}{1+\alpha})$ and it is equal to the makespan on some machine.\\
Consider now an allocation returned by {\sc GreedyDedicated} and denote its makespan by $C_{\max}(\sigma(m_1^*))$. As $\alpha_{t,t}=1$ and no machine is shared, the cost is equal to the makespan $C_{\max}(\sigma(m_1^*))$.
As it is a $(1+\varepsilon)$-approximation of $C^*_{\max}$ 
and $C^*_{\max} \leq C_{\max}(\sigma')$, 
$C_{\max}(\sigma(m_1^*)) \leq (1+\varepsilon)C_{\max}(\sigma')$,
thus $C_{\max}(\sigma(m_1^*)) \leq  (1+\varepsilon) OPT (1 + \frac{1}{1+\alpha})$.
\end{hideproof}
\fi

\section{Experiments}\label{sec:experiments}
\ifarxiv
\subsection{Method}
\begin{table}[tb]
  \caption{Coefficients $\alpha_{t,t'}$ for 2, 3 and 4 types (rows) and 4 variants of setting the coefficients (columns).}\label{tab:alphas}
  \resizebox{\columnwidth}{!}{%
  {\footnotesize
    \begin{tabular}{l |l| l| l|l}
      &compatible & mixed & incompatible & clashing\\
\hline
$T=2$&
\tabcolsep=0.11cm\begin{tabular}{c c }
  1 & 0.5 \\
  0.5 & 1 \\
\end{tabular}
&
&
\tabcolsep=0.11cm\begin{tabular}{c c }
  1 & 1.5 \\
  1.5 & 1 \\
\end{tabular}
&
\tabcolsep=0.11cm\begin{tabular}{c   c }
  1 & 2 \\
  2 & 1 \\
\end{tabular}
    \\[4ex]
\hline
$T=3$&
\tabcolsep=0.11cm\begin{tabular}{c   c   c }
  1 & 0.5  & 0.25 \\
  0.5 & 1 & 0.5\\
  0.25 & 0.5 & 1 \\
\end{tabular}
&
\tabcolsep=0.11cm\begin{tabular}{c   c   c }
  1 & 0.5 & 1.5 \\
  0.5 & 1 & 1.5 \\
  1.5 & 1.5 & 1 \\
\end{tabular}
&
\tabcolsep=0.11cm\begin{tabular}{c   c   c }
  1 & 1.3 & 1.6 \\
  1.3 & 1 & 1.3 \\
  1.6 & 1.3 & 1 \\
\end{tabular}
&
\tabcolsep=0.11cm\begin{tabular}{c   c   c }
  1 & 2 & 3 \\
  2 & 1 & 2 \\
  3 & 2 & 1 \\
\end{tabular}
    \\[4ex]
\hline
$T=4$&
\tabcolsep=0.11cm\begin{tabular}{c  c  c  c}
  1 & 0.75  & 0.5  & 0.25 \\
  0.75 & 1 & 0.75 & 0.5 \\
  0.5 & 0.75 & 1 & 0.75 \\
  0.25 & 0.5 & 0.75 & 1
\end{tabular}
&
\tabcolsep=0.11cm\begin{tabular}{c   c   c   c}
  1 & 0.5 & 1.5 & 2 \\
  0.5 & 1 & 1.5 & 2 \\
  1.5 & 1.5 & 1 & 0.5 \\
  2 & 2 & 0.5 & 1
\end{tabular}
&
\tabcolsep=0.11cm\begin{tabular}{c   c   c   c}
  1 & 1.25 & 1.5 & 1.75 \\
  1.25 & 1 & 1.25 & 1.5 \\
  1.5 & 1.25 & 1 & 1.25 \\
  1.75 & 1.5 & 1.25 & 1
\end{tabular}
&
\tabcolsep=0.11cm\begin{tabular}{c   c   c   c}
  1 & 2 & 3 & 4 \\
  2 & 1 & 2 & 3 \\
  3 & 2 & 1 & 2 \\
  4 & 3 & 2 & 1
\end{tabular}
    \\
\hline
  \end{tabular}
}
}
\end{table}

\subsubsection{Data}
We used the Google Cluster Trace~\cite{reiss2012heterogeneity}, the standard dataset for datacenter/cloud resource management research, as an input data. The trace describes all tasks running during a month on one of the Google clusters. For each task, the trace reports in its task record table, among other data, the task's CPU, memory and disk IO usage averaged over a 5-minute long period.  This trace is certainly not ideal for our needs: the trace reports the usage of raw resources (CPU, memory, network, disk), and not the load of applications. However, to our best knowledge, there are no publicly-available traces describing loads and performance of applications (in contrast to raw resources). 

We generate a random sample of $10 000$ task records. Each task record corresponds to a task in our model. To generate loads and types, we use data on the mean CPU utilization and the assigned memory. We normalize CPU and memory utilization to their respective maximums. We remove 45\% of task records that reported less than $0.005$ in both normalized CPU and memory usage.

To assign one of $T$ types to a task, we analyze the ratio $\rho$ of the weighted CPU to the weighted memory usage. For $T=2$, a task with $\rho \leq 1$ is of type 1 (memory-intensive), and a task of $\rho > 1$ is of type 2 (CPU-intensive); this partitions the dataset in almost equal halves. For $T=3$, we pick tresholds $\log(\rho_1)=-0.66$ and $\log(\rho_2)=0.66$. 
(We chose these treshold from the histogram of the distribution of $\rho$ --- they correspond to values of $\rho$ for which the number of tasks significantly diminishes):
Type 1 is memory-intensive (10\%), type 3 is CPU-intensive (11\%) and type 2 is a mixed CPU-memory task.
(An alternative would be to use the disk IO measurements also reported in the trace; however, in only 3\% of tasks the normalized disk IO dominates both CPU and memory, which would result in 3-type instances having a very small number of tasks of the 3rd type, and thus virtually identical to 2-type instances.)

For $T=4$ we use tresholds of $\log(\rho_1)=-0.66$, then $\rho_2=1$, and $\log(\rho_3)=0.66$; type 1 is memory-intensive (10\%), type 2 is memory-CPU (40\%), 

To assign load to a task, we take the maximum from the weighted CPU and weighted memory, multiply this maximum by 100 and round to the nearest integer.

To generate an instance of $n$ tasks belonging to $T$ types, we take a random sample of $n$ tasks from the dataset; thus, the proportions of types in the generated instance are similar to the dataset. However, a random sample might have less than $T$ types: if it is the case, we remove a task from the most common type in the instance and add a task of the missing type.

We generate the coefficients $\alpha$ in four different ways. In all instances coefficients are symmetric ($\alpha_{t,t'}=\alpha_{t',t}$) and normalized ($\alpha_{t,t}=1$):
\begin{itemize}[nosep,leftmargin=.1em,labelwidth=*,align=left]
\item \emph{compatible}: smaller than 1;
\item \emph{incompatible}: between 1 and 2;
\item \emph{clashing}: at least 2;
\item \emph{mixed}: 2 incompatible clusters (see Section~\ref{sec:greedy-list}).
\end{itemize}
Table~\ref{tab:alphas} shows all coefficients: for instance, for $T=4$ and \emph{mixed} instances, $\alpha_{3,4}=0.5$.
In \emph{compatible} instances, the further apart the type numbers are, the lower the coefficients. In \emph{incompatible} instances, the further apart the type numbers are, the higher the coefficients; in \emph{clashing} instaces the values are higher than in \emph{incompatible}. Finally,  \emph{mixed} instances have both compatible and incompatible types: e.g., for three types, types 1 and 2 are compatible ($\alpha_{1,2}=0.5$), while both are incompatible with the third type

Note that the way we set the coefficients in non-compatible scenarios does not correspond to the way we partition the trace into types. We continue with these discretionary values as, first, we want to test our algorithms for variety of settings; and, second, we are not aware of any better dataset.

We generate instances with four different ways of setting the coefficients, the number of types $T \in \{ 2, 3, 4\}$ and two sizes: in \emph{small} instances, the number of tasks $n \in \{ 10, 20, 50 \}$ and the number of machines $m \in \{ 2, 3, 5, 10 \}$ (we generate all possibilities). In \emph{large} instances, the number of tasks $n \in \{ 200, 500, 1000 \}$ and the number of machines $m \in \{ 20, 50, 100 \}$ (again, we generate all possibilites). Finally, we discard unfeasible combinations: for \emph{clashing} and \emph{incompatible} coefficients, instances in which the number of types is higher than the number of machines; for \emph{mixed} instances, instances with the number of types smaller than 3. For each feasible combination, we generate 30 instances. Overall, we generate 6390 feasible instances.

\subsubsection{Algorithms}
We study the following algorithms:
\begin{itemize}[nosep,leftmargin=.1em,labelwidth=*,align=left]

\item \greedy, denoted by $fill$ in plots (Section~\ref{sec:greedy-list}): we use binary search to optimize the bound up to which each machine is loaded. We also sort tasks in each cluster by decreasing lengths (which makes our algorithm analogous to the last-fit decreasing bin packing algorithms).
\item {\sc SchedJuxtapose} ($jux$ in plots), (Section~\ref{sec:two-types-compatible}): When juxtaposing schedules of different types, we reverse the order of machines for every other type (as in LPT with a small number of tasks, the machines with smallest indices have the highest load).
\item{\sc SchedMixed} ($mix$ in plots), (Section~\ref{sec:two-types-compatible});
\item{\sc BestSchedule} ($best$ in plots), (Section~\ref{sec:two-types-compatible});
\item{\sc \lptbytypes}  ($g2$ in plots), (Section~\ref{sec:incompatible});
\item{\sc GreedyDedicated}  ($ded$ in plots), (Section~\ref{sec:two-types-clashing}).
\end{itemize}

We generalize algorithms from Section~\ref{sec:two-types} for multiple types. We also do binary search to minimize the boundary value in \greedy and \lptbytypes. We use LPT as $\mathcal{A}$, the underlying scheduling algorithm for the single type problem ($P||C_{\max}$). LPT orders tasks by decreasing sizes; tasks are processed sequentially and each task is assigned to a machine with the smallest total load (note that we consider here the load and not the cost).

We run \greedy on all variants of coefficients; {\sc SchedJuxtapose}, {\sc SchedMixed} and {\sc BestSchedule} on \emph{compatible} coefficients;
{\sc SchedMixed}, \lptbytypes and {\sc GreedyDedicated} on \emph{incompatible} instances;
and {\sc GreedyDedicated} on \emph{clashing} instances. 
On \emph{mixed} instances, we run {\sc SchedMixed} (as in Section~\ref{sec:incompatible}), and \lptbytypes (here the 2 types are the two clusters of compatible types). We also run {\sc GreedyDedicated} between the two clusters. In this case the algorithm used inside the clusters (algorithm $\mathcal{A}$) is either {\sc SchedJuxtapose} (denoted by $d-jux$ in plots), {\sc SchedMixed}  ($d-mix$ in plots) or {\sc BestSchedule} ($d-best$ in plots).

\subsubsection{Scoring}
We compared the maximum cost returned by the algorithms to the lower bound and computed the relative performance. We used the following lower bounds. (1) $p_{\max}$, the maximum size of the task (as in our experiments for each type $\alpha_{t,t}=1$, the cost on the machine on which the longest task is allocated is at least $p_{\max}$. (2) For incompatible and clashing instances, the average load of a machine, $W / m$. (3) For compatible instances, a solution of the following LP:
\begin{align*}
&\min c, \text{such that}\\[-.5em]
&\forall k \in [1, m], \forall t' :  \sum_{t} x_{t, k} W^{(t)} \min(1, \alpha_{t,t'}) \leq c \\[-1.2em]
&\forall t : \sum_k x_{t,k} = 1 \\[-0.3em]
&\forall k \in [1,m], \forall t:  0 \leq x_{t,k} \leq 1,
\end{align*}
where $c$ represents the cost; $W^{(t)}$ is the type's $t$ total load ($W^{(t)} = \sum_i p_i^{(t)}$); and $x_{t,k}$ are decision variables specifying the fraction of type $t$'s load to be allocated to machine $k$. (4) For mixed instances, the LP defined above solved for each cluster of compatible types on $m$ machines: note that this lower bound is loose, as effectively assumes $m \cdot K$ machines.

\subsection{Results}

\else
We used the Google Cluster Trace~\cite{reiss2012heterogeneity}, the standard dataset for datacenter/cloud resource management research, as an input data.
The trace is certainly not ideal for our needs as it shows the usage of raw resources (CPU, memory, network, disk), and not the load of applications. However, to our best knowledge, there are no publicly-available traces describing loads and performance of applications.
Due to space constraints we describe the details of conversion in the accompanying technical report~\cite{pascual2017maxcost-techreport}.

We generate a random sample of $10\,000$ task records. Each task record corresponds to a task in our model. To generate loads and types, we use data on the (normalized) mean CPU utilization and the assigned memory. Task's type is determined by the ratio of the the CPU to the memory usage.
We generate the coefficients $\alpha$ in four different ways:
\emph{compatible}: smaller than 1; \emph{incompatible}: between 1 and 2; \emph{clashing}: at least 2; \emph{mixed}: 2 incompatible clusters. There are $T \in \{ 2, 3, 4\}$ types.
Instances have two sizes: in \emph{small} ones, the there are $n \in \{ 10, 20, 50 \}$ tasks and $m \in \{ 2, 3, 5, 10 \}$ machines; In \emph{large} ones, there are $n \in \{ 200, 500, 1\,000 \}$ tasks and $m \in \{ 20, 50, 100 \}$ machines.
For each combination we generate 30 instances; after discarding some unfeasible combinations (e.g., mixed, $T=2$), we have $6\,390$ instances.

We tested the following algorithms:
\greedy ($fill$ in plots);
{\sc SchedMixed} ($mix$);
{\sc SchedJuxtapose} ($jux$);
{\sc BestSchedule} ($best$);
{\sc GreedyDedicated} with either {\sc SchedJuxtapose} ($d-jux$), {\sc SchedMixed}  ($d-mix$) or {\sc BestSchedule} ($d-best$).
We omit some algorithms on some instances if they are clearly sub-optimal: {\sc GreedyDedicated} on compatible instances; and {\sc SchedJuxtapose} on all but compatible instances. On incompatible and clashing instances, all variants of {\sc GreedyDedicated} result in the same allocation---we denote the algorithm by $ded$ in this case.

We compared the maximum cost returned by the algorithms to the lower bound and computed the relative performance. We used the following lower bounds. (1) $p_{\max}$, the maximum size of the task (as we assume $\alpha_{t,t}=1$, the cost on the machine on which the longest task is allocated is at least $p_{\max}$). (2) For incompatible and clashing instances, the average load of a machine, $W / m$. (3) For compatible instances, a solution of a LP that optimizes the fraction of type $t$'s load to be allocated to machine $k$.
(4) For mixed instances, the same LP solved for each cluster on $m$ machines (this lower bound assumes that there are $m K$ machines available).
\fi

\begin{figure*}[!t]
\vspace{-2em} 
\centering
\subfloat{\raisebox{-0.5cm}{\rotatebox[origin=t]{90}{{\tiny max cost (normalized by LB) }}}
}
\hspace{0.1cm}
\addtocounter{subfigure}{-1}%
\subfloat[small compatible]{{\includegraphics[width=.30\textwidth]{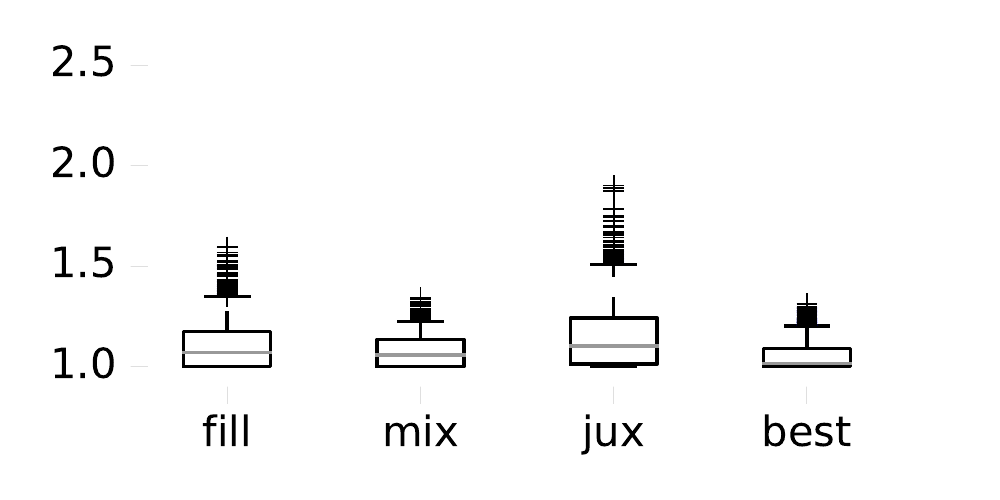} }}%
\hspace{-0.5cm}
\subfloat[small mixed]{{\includegraphics[width=.30\textwidth]{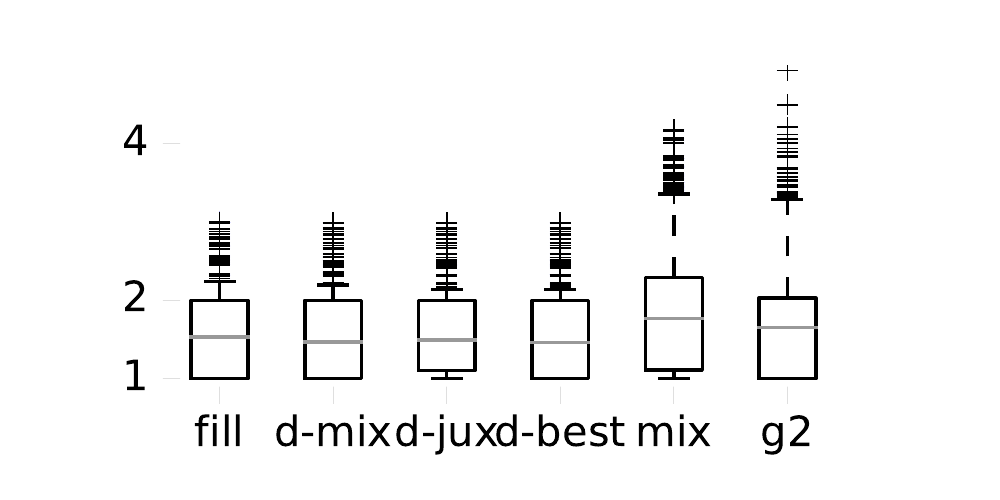} }}
\subfloat[small incompatible]{{\includegraphics[width=.30\textwidth]{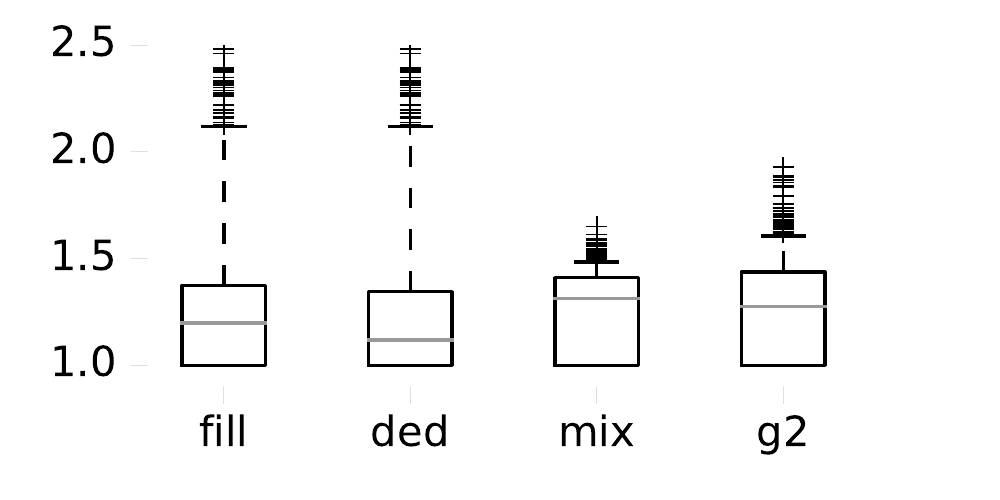} }}%

\vspace{-4em}
\subfloat[large compatible]{{\includegraphics[width=.30\textwidth]{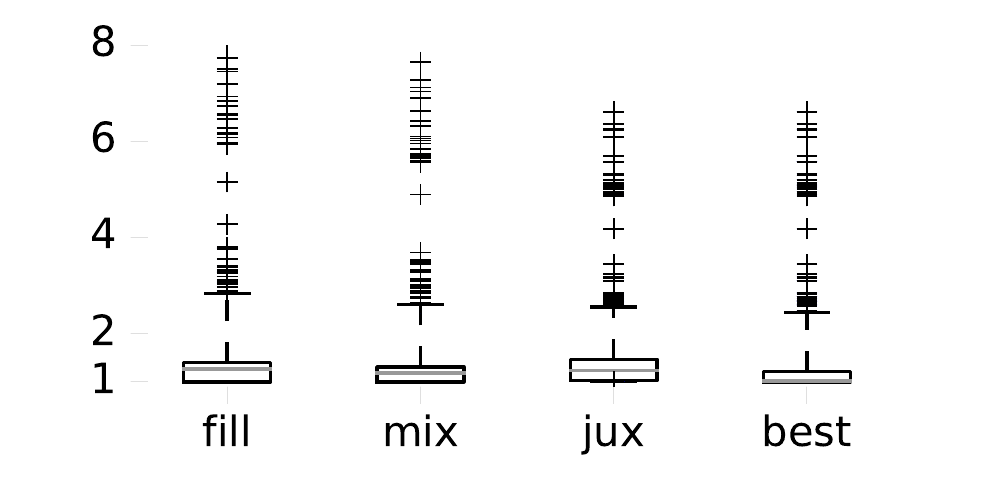} }}%
\hspace{-0.5cm}
\subfloat[large mixed]{{\includegraphics[width=.30\textwidth]{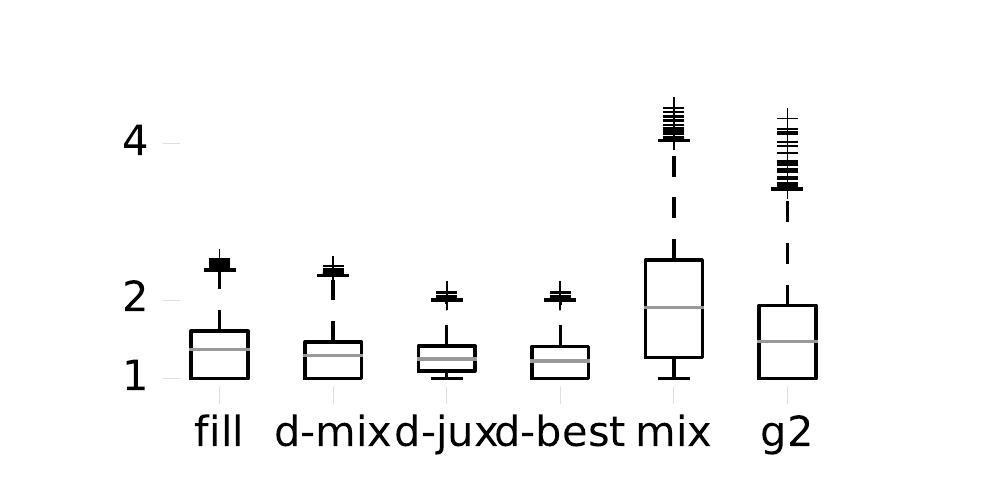} }}
\subfloat[large incompatible]{{\includegraphics[width=.30\textwidth]{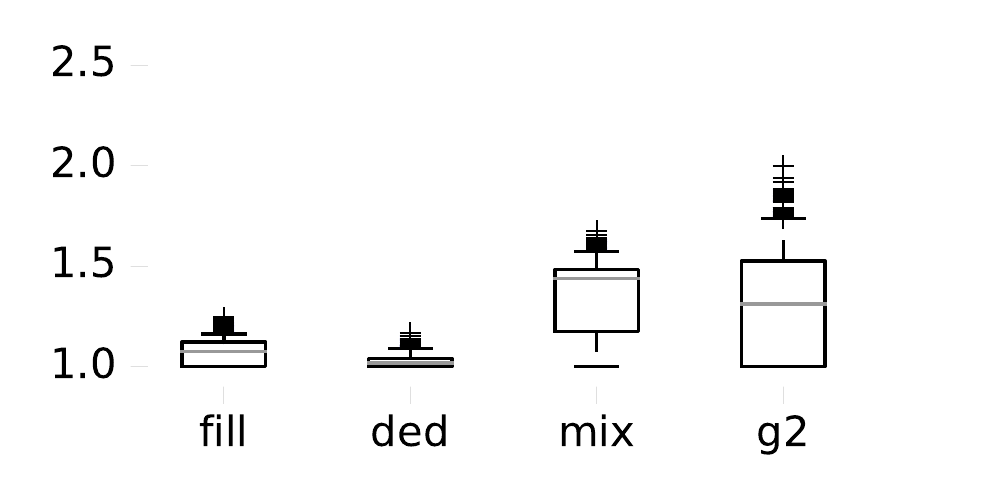} }}%
\vspace{-.5em}\caption{The maximum cost of the solutions returned by various heuristics normalized by the lower bound. All instances. In boxplots the middle line represents the median, the box spans between the 25th and the 75th percentile, the whiskers span between the 5th and the 95th percentile, and the asterisks show all the remaining points (outliers).}
\label{fig:cost-alpha-size}
\end{figure*}

\ifarxiv

\begin{figure*}[!t]
\vspace{-2em}
\centering
\subfloat{\rlap{{\tiny\hspace{0.6cm} machines $\longrightarrow$}}\raisebox{0.5cm}{\rotatebox[origin=t]{90}{{\tiny max cost}}}   
}
\subfloat{\raisebox{0.5cm}{\rotatebox[origin=t]{90}{{\tiny (normalized by LB)}}} 
}
\hspace{0.1cm}
\addtocounter{subfigure}{-1}%
\subfloat[compatible]{{\includegraphics[width=.30\textwidth]{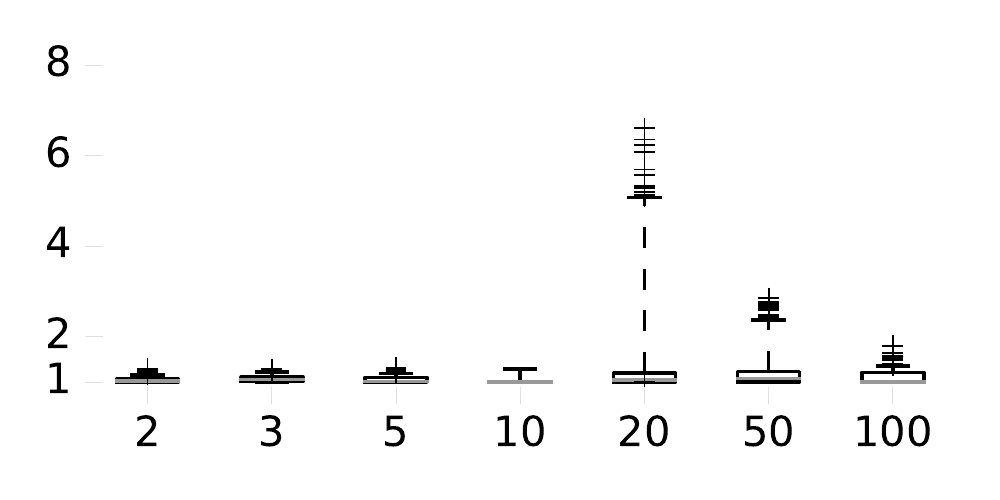} }}%
\hspace{-0.5cm}%
\subfloat[mixed]{{\includegraphics[width=.30\textwidth]{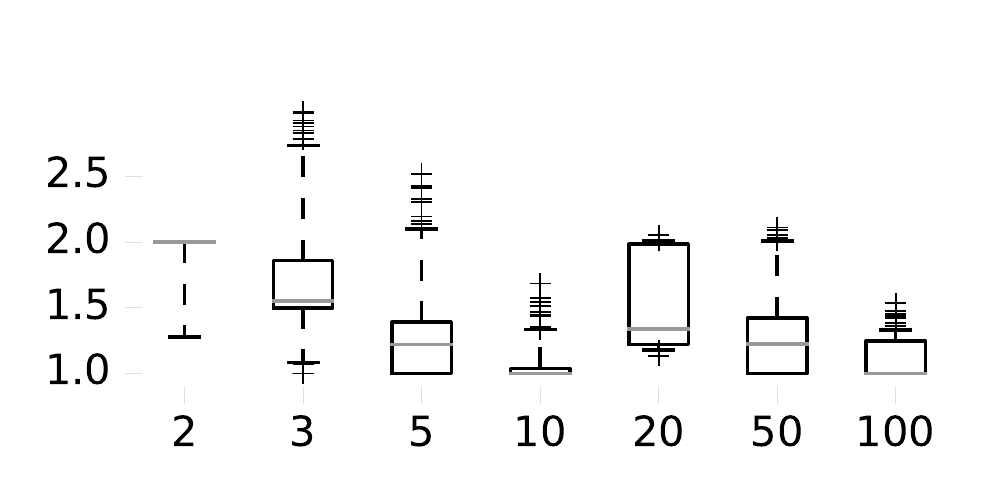} }}
\subfloat[incompatible]{{\includegraphics[width=.30\textwidth]{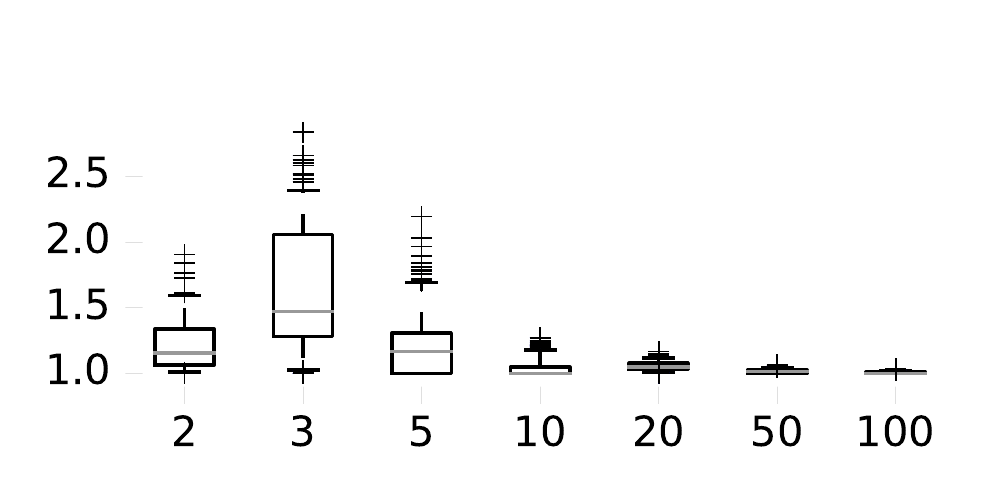} }}%
\caption{The maximum cost of the solutions returned by {\sc GreedyDedicated/SchedMixed} normalized by the lower bound. All instances.}
\label{fig:cost-by-procs}
\end{figure*}

\begin{figure}[!t]
\vspace{-1.5em}\centering
\subfloat{\raisebox{0.5cm}{\rotatebox[origin=t]{90}{{\tiny max cost}}}   
}
\subfloat{\raisebox{0.5cm}{\rotatebox[origin=t]{90}{{\tiny (normalized by LB)}}} 
}
\hspace{-0.1cm}
\addtocounter{subfigure}{-2}%
\subfloat[large compatible, LP solver completed]{{\includegraphics[width=.20\textwidth]{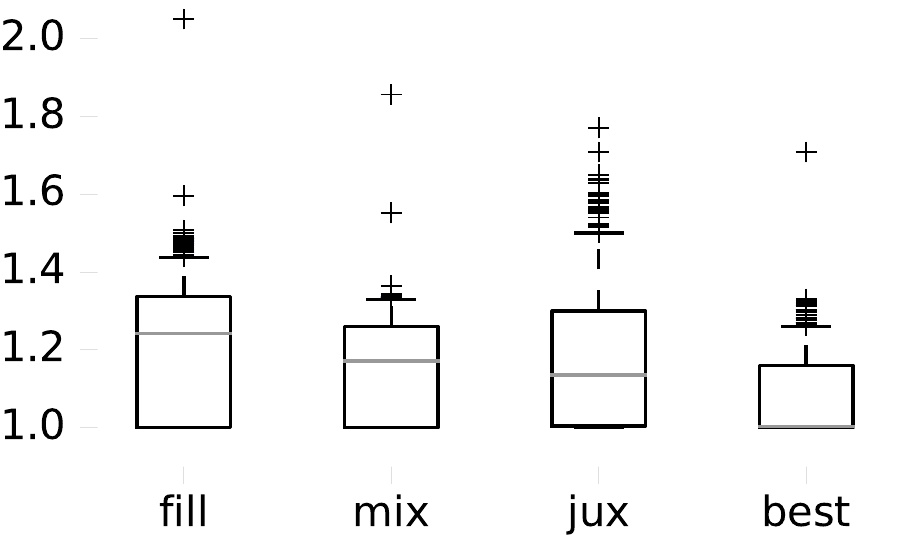} }}%
\hspace{.2cm}
\subfloat[{\sc SchedBest}, mixed instances, by number of proc.]{{\includegraphics[width=.20\textwidth]{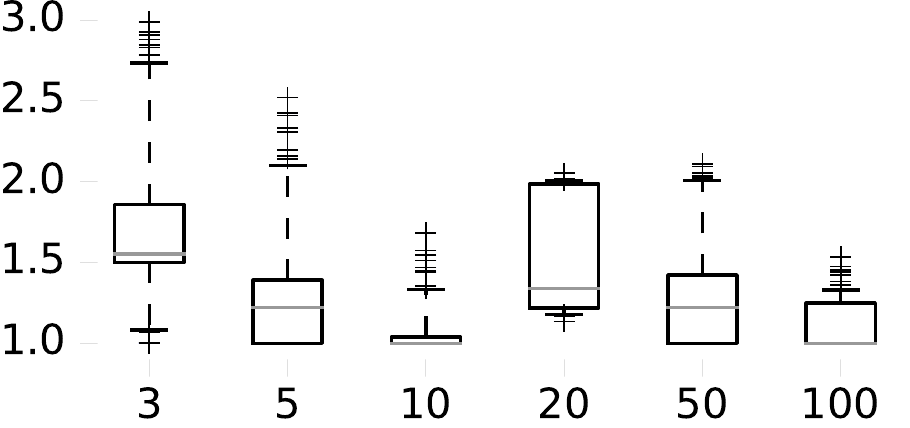} }}
\vspace{-.5em}\caption{Two cases illustrating problems with the lower bound.}\vspace{-2em}
\label{fig:cost-lp-finished}
\end{figure}
\fi

Figure~\ref{fig:cost-alpha-size} presents the normalized cost (scores on clashing and incompatible instances are exactly the same).
\ifarxiv
\subsubsection{Problems with the LP lower bound} 
We had two kinds of problems with the lower bound, both resulting in underestimation of the optimal solution and thus overestimation of the cost of our algorithms.
First, the LP solver we used (python-scipy) often failed on large compatible instances: on 6\% of $T=3$ and 70\% of $T=4$ instances.
On these instances, we used $p_{\max}$, which resulted in a lower bound that might underestimated the actual solution cost, and therefore an possible overestimation of the cost of the solutions returned by our algorithms.
When these instances are filtered out (Figure~\ref{fig:cost-lp-finished}.a), the number of outliers drops. To reduce the effect of the outliers on statistics, we discuss medians, rather than means, in the sequel.

The second problem is that the lower bound underestimates the cost of \emph{mixed} solutions. This can be seen in Figure~\ref{fig:cost-lp-finished}.b: the median score of {\sc SchedBest} is worst for two cases with highest average load per processor ($m=3$ on small instances and $m=20$ on large instances); other algorithms behave similarly. Figure~\ref{fig:cost-by-procs} shows that the largest 95th percentile was for 20 machines.
\subsubsection{Relative performance of  the algorithms} 
\else
We had two kinds of problems with the lower bound (details in~\cite{pascual2017maxcost-techreport}), both resulting in underestimation of the optimal solution and thus overestimation of the cost of our algorithms.
First, the LP solver we used (python-scipy) often failed on large compatible instances: on 6\% of $T=3$ and 70\% of $T=4$ instances. Second, the LP lower bound underestimates the optimal cost of mixed instances. In such problematic instances, $p_{\max}$ was often used, which resulted in a lower bound that might significantly underestimate the cost of the optimum. To reduce the effect of such outliers, we discuss medians, rather than means, in the sequel.
\fi

Overall, all algorithms have similar performance and the performance is close to the lower bound except in mixed instances. On average,
{\sc GreedyDedicated}, {\sc SchedMixed} and {\sc SchedJux} produce schedules with lower costs than {\sc Fill Greedy} (note that {\sc SchedMixed} and {\sc SchedJux} are used directly for compatible instances, and as sub-procedures for {\sc GreedyDedicated} for the mixed instances); and {\sc BestSchedule} optimizes even further. All the results below are statistically-significant (two sided paired t-test, p-values smaller than 0.0001).

Incompatible coefficients isolate the difference between {\sc Fill Greedy}, {\sc GreedyDedicated}, and {\sc SchedMixed} (as all clusters have a single type, neither {\sc Fill Greedy} nor {\sc GreedyDedicated} allocate different types onto a single machine).
Results clearly show that sharing machines ({\sc SchedMixed}) leads to higher costs.
{\sc GreedyDedicated} produces allocations with the lowest cost: its median costs are 1.02 for large instances and from 1.12 for small instances.

Compatible coefficients isolate the difference between {\sc Fill Greedy}, {\sc SchedMixed} and {\sc SchedJux}. On the average, {\sc SchedMixed} produces schedules with a lower cost than {\sc SchedJux} (medians are 1.06 for small instances and 1.18 for large). However, {\sc BestSchedule}, choosing for each instance the best out of {\sc SchedMixed} and {\sc SchedJux} has even lower costs (1.01 for both small and large), demonstrating the need to occasionally use {\sc SchedJux}.
\ifarxiv
For small instances, {\sc Fill Greedy} has the second-lowest cost (median is 1.07, vs. 1.10 for {\sc SchedMixed}). In contrast, for large instances, {\sc SchedMixed} has the second-lowest cost (median is 1.24 vs 1.27 for {\sc SchedMixed}).
\fi

Finally, mixed coefficients test both aspects; however, the scores of all algorithms are higher due to an imprecise lower bound. {\sc GreedyDedicated} using {\sc BestSchedule} dominates other algorithms with medians 1.46 for small instances and 1.22 for large ones. While the numerical values are higher, we still clearly see the advantage of using type-aware algorithms, as {\sc SchedMixed} (used without {\sc GreedyDedicated}) has a significantly higher median score (1.77 for small instances, 1.91 for large).
\ifarxiv
Here, {\sc Fill Greedy} is dominated by other algorithms. In small instances, {\sc SchedMixed} produces allocations of slightly lower cost than {\sc SchedJux} ($1.06$ vs $1.10$). In large instances, {\sc SchedJux} produces allocations of slightly lower cost ($1.016$ vs $1.006$).
However, {\sc SchedJux} has more outliers in this case (Figure~\ref{fig:cost-by-procs}).
\fi

Due to space constraints, we do not present results in function of the number of tasks or the number of types. However, we have not found any strong dependencies between these variables and the results of our algorithm (apart from slightly --- up to 1.18 --- higher medians for 500 and 1000-task instances for compatible coefficients, caused by the LP problems discussed above).

Our results clearly show that using $P||C_{\max}$ algorithms without regarding types ({\sc SchedMixed}) is dominated by approaches considering types: using dedicated machines for $\alpha>1$ or, in some $\alpha<1$ instances, merging schedules of different types.

\section{Related Work}

We introduced the side-effects performance model~\cite{pascual2015partition}, where we studied a utilitarian (min-sum) objective. We proved that the problem is NP-hard, and we showed a dominance property (for each type, there is an order of the machines such that the tasks are assigned by decreasing sizes to the machines). This allows us to give an exact polynomial time algorithm when there is a single type. For the general case, we proposed two  algorithms, which are exponential in one data of the problem (number of types, and either the number of machines or the number of admissible sizes of the tasks).


{\em Alternative models of data~center resource management.} 
A recent survey is~\cite{pietri2016mapping}. Many colocation performance models are too complex for
combinatorial results~\cite{kundu2010application,kundu2012modeling,7134716}. Schedulers rely on heuristic approaches with no formal performance guarantees~\cite{chiang2011tracon,verma2015large,bu2013interference,Jersak:2016:PSC:2851613.2851625}. In \emph{bin-packing} approaches  (e.g.,\cite{song2014adaptive,DBLP:conf/ipps/TangLRC16}), tasks are modeled as items to be packed into bins (machines) of known capacity~\cite{coffman1996approximation}. 
To model heterogeneity, bin packing is extended to vector packing: item's size is a vector with dimensions corresponding to requirements on individual resources (CPU, memory, disk or network bandwidth)~\cite{stillwell2012virtual}. 
Alternatively, if tasks have unit-size requirements, simpler representations can be used, such as maximum weighted matching~\cite{beaumont2013heterogeneous}.
Bin packing approaches assume that machines' capacities are crisp and that, as long as machines are not overloaded, any allocation is equally good for tasks.
In our model, machines' capacities are not crisp---instead, tasks' performance gradually decreases with increased load.

{\em Statistical approaches.}  
Bobroff et al.~\cite{bobroff_dynamic_2007} uses statistics of the past CPU load of tasks (CDF, autocorrelation, periodograms) to predict the load in the ``next'' time period; then they use bin packing to calculate a partition minimizing the number of used bins subject to a constraint on the probability of overloading servers. Di et al.~\cite{di2015optimization} analyze resource sharing for streams of tasks to be processed by virtual machines. Sequential and parallel task streams are considered in two scenarios. When there are sufficient resources to run all tasks, optimality conditions are formulated. When the resources are insufficient, fair scheduling policies are proposed. 

{\em Analysis of effects of colocation.} Studies showing performance degeneration when colocating data~center tasks include \cite{kambadur2012measuring, koh2007analysis,xu2013bobtail}. \cite{podzimek15cpupinning} analyze the performance of colocated CPU-intensive benchmarks; and \cite{kim15corunneraff} measures performance of colocated HPC applications. Our $\alpha_{t,t'}$ coefficients are similar to their interference/affinity metrics. Additionally \cite{kim15corunneraff} shows a greedy allocation heuristics, but they don't study its worst-case performance nor the complexity of the problem.
\vspace{-1em}

\section{Conclusion}
We considered a problem of optimally allocating tasks to machines in the side-effects performance model. Performance of a task depends on the load of other tasks colocated on the same machine. We use a linear performance function: the influence of tasks of type $t'$ is their total load times a coefficient $\alpha_{t',t}$, describing how compatible is $t'$ with performance of $t$. We minimize the maximal cost. We prove that this NP-hard problem is hard to approximate if there are many types. However, handling a limited number of types is feasible: we show a PTAS and a fast approximation algorithm, as well as a series of heuristics
\ifarxiv
that are approximation algorithms for two types.
\else
(we prove their approximation ratios for two types in the accompanying technical report~\cite{pascual2017maxcost-techreport}).
\fi
We simulate allocations resulting from algorithms on instances derived from one of Google clusters.
Our simulations show that algorithms taking into account types lead to significantly lower costs than non-type algorithms.

Our results show a possible way to adapt to data centers the large body of work in scheduling, which development has been often inspired by advances in HPC platforms. We deliberately chose to study a fundamental problem, a minimal extension to $P||C_{\max}$. We envision that results for more realistic variants of data center resource management problem, taking into account release dates, non-clairvoyance or on-line, can be taken into account similarly as they are considered in classic scheduling.

We are also working on validating our model by a systems study. We are developing an extension for kubernetes that collects and correlates performance metrics reported by containers to derive the size/coefficient performance model.

\noindent{\bf Acknowledgements} We thank Paweł Janus for his help in processing the Google cluster data. This research has been partly supported by
a Polish National Science Center grant Sonata (UMO-2012/07/D/ST6/ 02440), and a Polonium grant (joint programme of the French Ministry of Foreign Affairs, the Ministry of Science and Higher Education and the Polish Ministry of Science and Higher Education).

\bibliographystyle{splncs03}
\bibliography{biblipse}
\end{document}